\title{Parallel tree algorithms for AMR
       \\
       and non-standard data access%
       \thanks{This work was supported by
              the Hausdorff Center for Mathematics,
              Universit\"at Bonn, Germany}}%
\author{Carsten Burstedde\thanks{Institut f\"ur Numerische Simulation,
        Rheinische Friedrich-Wilhelms-Universit\"at Bonn, Germany}}%
\newtheorem{convention}[theorem]{Convention}
\newtheorem{property}[theorem]{Property}
\newtheorem{principle}[theorem]{Principle}
\newcommand{\dendro}{\texttt{Dendro}\xspace}
\newcommand{\pforest}{\texttt{p4est}\xspace}
\newcommand{\pforestfun}[1]{\texttt{p4est\_\-#1}\xspace}
\newcommand{\pfbuild}{\pforestfun{build}}
\newcommand{\pfcompleteregion}{\pforestfun{complete\_\-region}}
\newcommand{\pfcompletesubtree}{\pforestfun{complete\_\-subtree}}
\newcommand{\pfghostexpand}{\pforestfun{ghost\_\-expand}}
\newcommand{\pfnearestcommon}{\pforestfun{nearest\_\-common\_\-ancestor}}
\newcommand{\pfenlargefirst}{\pforestfun{enlarge\_\-first}}
\newcommand{\pfenlargelast}{\pforestfun{enlarge\_\-last}}
\newcommand{\pfpartition}{\pforestfun{partition}}
\newcommand{\pfpertree}{\pforestfun{count\_\-pertree}}
\newcommand{\pfsave}{\pforestfun{save}}
\newcommand{\pfsearch}{\pforestfun{search}}
\newcommand{\pfsearchpartition}{\pforestfun{search\_\-partition}}
\newcommand{\pftransferfixed}{\pforestfun{transfer\_\-fixed}}
\newcommand{\pftransfervariable}{\pforestfun{transfer\_\-variable}}
\newcommand{\narynotify}{\texttt{sc\_nary\_\-notify}\xspace}
\newcommand{\tetcode}{\texttt{t8code}\xspace}
\newcommand{\scfun}[1]{\texttt{sc\_\-#1}\xspace}
\newcommand{\scsplit}{\scfun{array\_\-split}}
\newcommand{\fun}[1]{{\normalfont \texttt{#1}}\xspace}
\newcommand{\mpifun}[1]{\texttt{MPI\_\-#1}\xspace}
\newcommand{\mpiallgather}{\mpifun{Allgather}}
\newcommand{\mpiallgatherv}{\mpifun{All\-gatherv}}
\newcommand{\mpialltoall}{\mpifun{Alltoall}}
\newcommand{\mpiirecv}{\mpifun{Irecv}}
\newcommand{\Add}{\fun{Add}}
\newcommand{\Match}{\fun{Match}}
\newcommand{\beginsquadrant}{\fun{begins\_\-with}}
\newcommand{\begintree}{\fun{begin\_tree}}
\newcommand{\ndtree}{\fun{end\_tree}}
\newcommand{\buildadd}{\pforestfun{build\_add}}
\newcommand{\buildnew}{\pforestfun{build\_begin}}
\newcommand{\buildend}{\pforestfun{build\_end}}
\newcommand{\rcp}{RC+P\xspace}
\newcommand{\rcbs}{RC+B\xspace}
\newcommand{\rcbsp}{RC+B+P\xspace}
\newcommand{\ie}{i.e.\xspace}
\newcommand{\alglab}[1]{\label{alg:#1}}
\newcommand{\algref}[1]{Algorithm~\ref{alg:#1}\xspace}
\newcommand{\eqnlab}[1]{\label{eqn:#1}}
\newcommand{\eqnref}[1]{\eqref{eqn:#1}\xspace}
\newcommand{\figlab}[1]{\label{fig:#1}}
\newcommand{\figref}[1]{Figure~\ref{fig:#1}\xspace}
\newcommand{\tablab}[1]{\label{tab:#1}}
\newcommand{\tabref}[1]{Table~\ref{tab:#1}\xspace}
\newcommand{\linelab}[1]{\label{line:#1}}
\newcommand{\lineref}[1]{Line~\ref{line:#1}\xspace}
\newcommand{\conventionref}[1]{Convention~\ref{convention:#1}\xspace}
\newcommand{\propertyref}[1]{Property~\ref{property:#1}\xspace}
\newcommand{\principleref}[1]{Principle~\ref{principle:#1}\xspace}
\newcommand{\seclab}[1]{\label{sec:#1}}
\newcommand{\secref}[1]{Section~\ref{sec:#1}\xspace}
\newcommand{\arrE}{\mathfrak{E}}
\newcommand{\arrK}{\mathfrak{K}}
\newcommand{\arrN}{\mathfrak{N}}
\newcommand{\arrO}{\mathfrak{O}}
\newcommand{\arra}{\mathfrak{a}}
\newcommand{\arrm}{\mathfrak{m}}
\newcommand{\arrn}{\mathfrak{n}}
\newcommand{\rmin}{\mathrm{min}}
\newcommand{\rmax}{\mathrm{max}}
\newcommand{\rgeom}{\mathrm{geom}}
\newcommand{\rarith}{\mathrm{arith}}
\newcommand{\rafter}{\mathrm{after}}
\newcommand{\rbefore}{\mathrm{before}}
\newcommand{\rdesc}{\mathrm{desc}}
\newcommand{\rfirst}{\mathrm{first}}
\newcommand{\rfirstlocaltree}{\mathrm{first\_local\_tree}}
\newcommand{\rlast}{\mathrm{last}}
\newcommand{\rlastlocaltree}{\mathrm{last\_local\_tree}}
\newcommand{\rnumbers}{\mathrm{numbers}}
\newcommand{\rroot}{\mathrm{root}}
\newcommand{\rtree}{\mathrm{tree}}
\newcommand{\rtrees}{\mathrm{trees}}
\newcommand{\relements}{\mathrm{elements}}
\newcommand{\rinval}{\mathrm{is\_invalid}}
\newcommand{\rprev}{\mathrm{most\_recently\_added}}
\newcommand{\dr}{\mathrm{d}r}
\newcommand{\Ttree}{\mathcal{K}}
\DeclareMathOperator{\bor}{\vert}
\DeclareMathOperator{\band}{\&}
\newcommand{\mminus}{\hspace{.1ex}\text{$--$}}
\newcommand{\pplus}{\hspace{.1ex}\text{$++$}}
\newcommand{\cO}{\mathcal{O}}
\newcommand{\sR}{\mathbb{R}}
\newcommand{\sZ}{\mathbb{Z}}
\newcommand{\closedin}[1]{\lbrack#1\rbrack}
\newcommand{\halfopen}[1]{\lbrack#1)}
\begin{document}

\maketitle

\begin{abstract}
  We introduce several
  parallel
  algorithms operating on a distributed forest of adaptive quadtrees/octrees.
  They are targeted at large-scale applications relying on
  data layouts that are more complex than required for standard finite elements,
  such as
  $hp$-adaptive
  Galerkin methods,
  particle tracking and semi-Lagrangian schemes,
  and in-situ post-processing and visualization.
  Specifically, we design algorithms to derive an adapted worker forest
  based on sparse data, to identify owner processes in
  a top-down search of remote objects,
  and to allow for variable process counts and per-element data sizes in
  partitioning and parallel file I/O.
%
  We demonstrate the algorithms' usability and performance in the context of
  a particle tracking example
  that we scale to 21e9 particles and 64Ki MPI processes on the Juqueen
  supercomputer,
  and we describe the construction of a parallel assembly of variably sized
  spheres in space creating up to 768e9 elements on the Juwels supercomputer.

\end{abstract}

\begin{keywords}
  Parallel algorithms, adaptive mesh refinement,
  forest of octrees, particle tracking
\end{keywords}

\begin{AMS}
  65D18,
  65M50,
  65Y05,
  68W10 
\end{AMS}

\pagestyle{myheadings}
\thispagestyle{plain}
\markboth{C.\ BURSTEDDE}{TREE ALGORITHMS FOR NON-STANDARD DATA}

\section{Introduction}



Numerical methods to solve partial differential equations (PDEs) have become
ubiquitous in science and industry.
Many approaches subdivide the domain of the PDE into a mesh of cells that
constitute the computational elements.
The finite/spectral element/volume methods are among the most prevalent
techniques and establish mathematical links between nearest-neighbor elements;
see e.g.\ \cite{StrangFix88, BabuskaGuo92, AinsworthOden00,
CockburnKarniadakisShu00, LeVeque02, FischerKruseLoth02}.
This concept is tremendously useful to realize parallel computing, where each
process works on a subregion of the mesh, and their coupling is implemented by
communicating data only between processes that hold adjacent elements.



For some applications, however, nearest-neighbor-only communication is an
undesired constraint.
This applies to element-based particle tracking, such as the
particle/marker-in-cell methods \cite{Harlow64, HarlowWelch65}, used for
example in plasma physics \cite{Dawson83} or viscoelasticity
\cite{MoresiDufourMuhlhaus03}, to semi-Lagrangian methods such as
\cite{DrakeFosterMichalakesEtAl95}, and to smoothed particle hydrodynamics
\cite{GingoldMonaghan77} and molecular dynamics
\cite{EckhardtHeineckeBaderEtAl13}.
Here the mathematical design allows for moving numerical information by more
than one mesh element per time step.
If this is attempted in practice, new ideas are needed to locate points on
non-neighbor remote elements and to find their assigned process.
If the ``points'' are extended geometric objects that can stretch across more
than one element/process, such as in rigid body dynamics
\cite{PreparataShamos85}, an algorithm must cope with multivalued results.


Another generalization of the above-men\-tioned classic methods is the
association of variably sized data to elements.
An obvious example is the $hp$-adaptive finite element method
\cite{SuliSchwabHouston00}, where the data size for an element depends on its
degree of approximation.
More generally, we may think of multiple phases or sub-processes, say physical
or chemical, that differ locally in their data usage.
We may also think of selecting a subset of elements for processing while
ignoring the rest, which can be useful for visualization (visible vs.\
non-visible \cite{FoleyDamFeinerEtAl90}) or file-based output (relevant vs.\
irrelevant according to a user).
Efficiently and adaptively managing such data and repartitioning it between
processes is nontrivial.

In this paper, we present several new high level algorithms as well as
low level building blocks that perform the operations motivated above.
With ``high level'' we refer to algorithms that appear as a well-defined black
box to the calling code, hiding any kind of mathematical intricacy and
specialty logic on the inside.
%
Our focus is on (a) highly scalable methods that (b) operate on dynamically
adaptive meshes.
Targeting simulations that run on present-day supercomputers, allowing for
meshes that adapt every few, each, or even several times per time step,
requires a carefully designed logical organization of the elements.
To support efficient searches and to aid in creation and partitioning of data,
we choose a combination of a distributed tree hierarchy and a linear ordering
of elements via a space filling curve (SFC) \cite{TuOHallaronGhattas05,
SundarSampathBiros08}.
To allow for general geometries, we generalize to a forest of one or more
trees \cite{StewartEdwards04, BangerthHartmannKanschat07,
BursteddeWilcoxGhattas11}.


\subsection{Proposed algorithms}

We understand the proposed algorithms as tools that may be relied upon by
conceptually higher-level codes, middleware and libraries, and also by
application developers requiring any of the features offered.
While the following description is abstract and general, the reader is
encouraged to visit Sections \ref{sec:particles} and \ref{sec:spheres} for
concrete, illustrated examples in the context of numerical simulations.

\paragraph{Sparse construction}

The first algorithm, presented in \secref{build}, serves to derive a custom
worker mesh in a direct one-pass algorithm, which reduces the run time over the
common procedure of calling multiple cycles of global refinement and
coarsening.
%
%
This can be useful in material simulations to create an independent forest
adapted to one subsystem (say a fracture zone).
Another use would be to create a worker forest for just the camera-visible
elements within an in-situ visualization algorithm.
These worker forests can be partitionend independently from the source forest
to run their specific task while preserving overall load-balance.
Our algorithm is sort-free and has sub-second run times throughout.
This improves over bottom-up constructions from scratch
\cite{SundarSampathBiros08}, which have their merit when such heavier
functionality is required by the application, by a large margin.
%
%
%

\paragraph{Remote process search}

In \secref{traverse}, we propose a top-down algorithm to find non-local
generalized points in the domain.
These ``points'' may be actual points or particles, extended geometric objects,
or arbitrarily shaped regions in space.
``Non-local'' means that we find the precise intersection of each point with
the set of process subdomains.
The algorithm has the following key features.
1.\
We search for multiple points in one pass to amortize mesh memory access,
and we enable multiple match elements/processes per point.
2.\
We enable both optimistic matching (for example to use fast bounding-box checks
closer to the root) and early discarding (to prune search subtrees as quickly
as possible).
3.\
We match points to subtrees on remote processes, even though we do not have
access to any remote element (ghost or otherwise).
In fact, the whole algorithm is local and communication-free.
This ostensible paradox is resolved due to our lightweight yet complete
encoding of the forest's partition.
%
%
%

\paragraph{Partitioning and parallel I/O}

When writing data to permanent storage, it is an advantage for testing and
general reproducibility if the output format is independent of the number of
processes und the partition of elements that has been used to compute it.
While devising such a format for a single tree is not hard, it becomes
more involved for a forest encoded with minimal metadata.
Another problem related to partitioning is the transfer of fixed- and
variable-sized per-element data, which most applications will allocate in their
own memory space.
For this data, no direct repartitioning interface has been available so far.
We propose two minimal parallel algorithms that perform these tasks in
\secref{partindep}.

%

%
%


\subsection{Related work}

The need for particle tracking is fundamental in astrophysics and molecular
dynamics simulations.
The use of tree codes for this purpose goes back a long time
 \cite{HernqvistKatz89, McMillanAarseth93}.
It has been found effective to sort the points by interpreting their
tree location relative to a space filling curve
\cite{RahimianLashukVeerapaneniEtAl10, WangZhouShao16}.
Such functionality is not always custom coded.
The FLASH code, for example, delegates the search of remote particles to the
mesh library \cite{DubeyDaleyZuHoneEtAl12}, while the Chombo library as one
such candidate exposes a formal interface for specifying particle locations
\cite{AdamsColellaGravesEtAl17}.
The Gadget3 code uses a Hilbert curve \cite{RagagninTchipevBaderEtAl16}, and
its feature to form groups of nearby particles for aggregated searching has
been introduced similarly in the molecular dynamics code GROMACS
\cite{AbrahamMurtolaSchulzEtAl15}.

Our algorithms are designed to satisfy the needs of all of the above codes.
In addition, we provide more generality in several respects, where one is that
we (a) allow for searching objects of positive diameter that may each intersect
with more than one process domain, which can be useful in rigid body dynamics
and visualization.
Another important aspect of the present work is that
we do (b) not rely on a single- or multi-width ghost layer or halo region.
Such reliance has frequently been found problematic in all areas from
astrophysics \cite{HowlettManeraPercival15} to smoothed particle hydrodynamics
\cite{WangZhouShao16} and molecular dynamics \cite{GlaserNguyenAndersonEtAl15}.

For efficiency and flexibility, we (c) support grouping in the spirit of
\cite{IsaacBursteddeWilcoxEtAl15}, which offers multiple and optimistic
matching as well as early pruning of search subtrees.
Encouraged by use cases \cite{Albrecht16}, we design our algorithms such that
we (d) require no data transformation \cite{AyachitBauerGeveciEtAl15},
no duplication of data structures
\cite{MirzadehGuittetBursteddeEtAl16}, and no parallel sorting
\cite{SundarSampathBiros08}.
Last but not least, our algorithms enable (e) large scale dynamic adaptive
meshing and repartitioning with sub-second absolute run times.

One of our goals is to avoid communication issues due to \mpialltoall, busy MPI
wait loops, or overallocation of buffers, and we make an effort to calculate
known sender and receiver ranks and message sizes.
%
In doing so, we find that our algorithms are ideally suited to produce the
exact meta information needed for precise point-to-point communication.
This information is similarly fitting for setting up active synchronization in
one-sided MPI \cite{HoeflerDinanThakur14}, such that we support both models.

\subsection{Examples and reproducibility}

\paragraph{Particle tracking example}

In \secref{particles},
we develop an element-based scheme that solves Newton's equations of motion for
a large number of non-interacting particles.
The elements are refined, coarsened, and partitioned dynamically to keep the
number of particles per element near a specified number.
Especially the non-local search of points is crucial to redistribute the
particles to the elements/processes after their positions are updated.


\paragraph{Remote search example}

In \secref{spheres},
we detail the parallel construction of randomly distributed spheres,
implementing a non-local refinement criterion by the use of the proposed
algorithms.
This approach guarantees a pseudo-random, configurable mesh refinement that
does not depend on the number of processes used to create it.
Thus, we establish and demonstrate partition-independence as an explicit
invariant.









%
%





\paragraph{Software}


We provide a reference implementation of all algorithms and example
programs as part of the \pforest software library \cite{Burstedde19}.
\pforest is an MPI-only code, where multiple compute cores per hardware node
are transparently supported by spawning the appropriate amount of MPI ranks.
Explicit shared-memory versions of our algorithms may be written, yet would add
little to the logic we expose in this document---instead, we rely on
optimized shared-memory MPI performance \cite{LiZhangHoefler18}.
We use the prefix \pforestfun{} for functions that can be found in the software
under the same or a similar name, while unprefixed subroutines serve to clarify
the exposition.
We take special care to be explicit about the less obvious mathematical
conventions and tricks that are synergetic to the \pforest design, which will
allow the reader to reimplement all algorithms in their own code.

\paragraph{In-situ visualization}

We refer the interested reader to \cite{Burstedde18b}
for an extended discussion of parallel visualization algorithms based
on the functionality exposed in this paper.


%


%
%
%
%
%
%

\section{Principles and conventions}
\seclab{principles}

Our algorithms do not rely on nearest neighbor relations but on the SFC order
that defines and encodes the parallel partition.
As a benefit, the algorithms presented here share the property that the forest
need not be 2:1 balanced and that they do not depend on a ghost layer.
We abstain from incremental tree encodings \cite{BaderBockSchwaigerEtAl10} to
ensure that all elements are individually accessible.

While we maintain the notion of elements, they need not necessarily refer to a
classical finite element or a numerical solver context.
We allow for arbitrary-size application data to be redistributed in parallel in
the same optimized way that is used for the adaptive mesh, which opens up the
performance and scalability established for managing meshes
\cite{BursteddeGhattasGurnisEtAl10, IsaacBursteddeWilcoxEtAl15}
to many sorts of data.
We hint at various examples and use cases in the respective sections of this
paper.


Throughout the paper, we will be dealing with integers exclusively.
When referring to integer intervals $[ a, b ) \cap \sZ$, we omit the
intersection for brevity.
All arrays are 0-based.
Cumulative arrays (i.e., arrays storing partial sums) are typeset in uppercase
fraktur ($\arrE$).
We denote the number of parallel processes (MPI ranks) by $P$, the number of
trees in the forest of octrees by $K$, and the global number of elements
(leaves of the forest) by $N$.
Thus, a process number reads $p \in [ 0, P )$ and a tree number $k \in [ 0, K
)$.

\subsection{Cycles of adaptation}
\seclab{cycles}

In a typical adaptive numerical simulation, the mesh evolves between time steps
in cycles of mesh refinement and coarsening (RC), mesh balancing and/or
smoothing (B), and repartition (P) for load balancing.
Not to be confused with the latter, mesh balancing may refer to establishing a
2:1 size condition between direct neighbor elements
\cite{TuOHallaronGhattas05, SundarSampathBiros08,
BursteddeWilcoxGhattas11, IsaacBursteddeGhattas12} and mesh smoothing to
establishing a graded transition in the sizes of more or less nearby elements.
After \rcbs, the new mesh exists in the same partition boundaries as the
previous one, while families of four (2D) or eight (3D) sibling elements have
been replaced by their parent, or vice versa.
We note that refinement and coarsening is rarely applied recursively,
except for example during the initialization phase of a simulation.

Since \rcbs changes the number of elements independently on each process, load
balance is lost, and P redistributes the elements in parallel to reinstate it.
To guarantee that one cycle of coarsening is always possible, the partition
algorithm may be modified to place every sibling of one family on the same
process \cite{SundarBirosBursteddeEtAl12}.
In some applications it may be beneficial to partition before refinement,
possibly using weights depending on refinement and coarsening indicators, in
order to avoid crashes when one process refines every local element and runs
out of memory.
P is complementary to \rcbs in the sense that it changes the partition boundary
while the elements stay the same.
This design ensures modularity between and flexible combination of individual
algorithms and simplifies the projection and transfer of simulation data
\cite[Figures 3 and 4]{BursteddeGhattasStadlerEtAl08}:

\begin{principle}[Complementarity principle]
\label{principle:complementarity}
A collective mesh operation shall either change the local element sizes within
the existing partition boundary, or change the partition boundary and keep the
elements the same, but not both.
\end{principle}

It should be noted that time stepping is not the only motivation to use
adaptivity:
When utmost accuracy of a single numerical solve is required, we may use
a-posteriori error estimation to refine and solve the same problem repeatedly
at successively higher resolutions;
when setting up a geometric multigrid solver, we create a hierarchy of coarser
versions of a mesh.
In both scenarios, we may add mesh smoothing and most definitely repartitioning
at each level of resolution using the same \rcbsp algorithms, which is key for
the scalability of geometric/algebraic solvers
\cite{BursteddeGhattasStadlerEtAl09,
      SundarBirosBursteddeEtAl12, RudiMalossiIsaacEtAl15}.

\subsection{Encoding a parallel forest}
\seclab{encoding}

We briefly introduce the relevant properties of the \pforest data structures
and algorithms \cite{BursteddeWilcoxGhattas11}, which we see in this paper as a
reference implementation of an abstract forest of octrees.
We consider a forest that is two- or three-dimensional, $d = 2$ or $3$, which
generalizes easily to arbitrary dimensions.
The topology of a forest is defined by its connectivity, \ie, an enumeration of
tree roots viewed as cubes mapped into $\sR^3$ together with a specification of
each one's neighbor trees across the tree faces, edges (in 3D), and corners.
Neighbor relations include the face/edge/corner number as viewed from the
neighbor and a relative orientation, since the coordinate systems of touching
trees need not align.

The mesh primitives in \pforest are quadrilaterals in 2D and hexahedra in 3D.
They arise as leaves of a quadtree (2D) or octree (3D), where a root can be
subdivided (refined) into $2^d$ child branches (subtrees).
The subdivision can be performed recursively on the subtrees.
For simplicity, we will
use the term ``quadrant'' for a tree node.
A quadrant is either a branch quadrant (it has child quadrants) or it is a leaf
quadrant.
The root quadrant is a leaf if the tree is not refined and a branch otherwise.
We call leaves in both 2D and 3D the elements of the adaptive mesh.

In practice, we limit the subdivision to a maximum depth or level $L$, where
the root is at level $\ell = 0$.
Accordingly, a quadrant is uniquely defined by the tree it belongs to, the
coordinates $(x_i) = (x, y, z)$ of its lower left front corner, each an integer
in $\halfopen{0, 2^L}$, and its level $\ell \in \closedin{0, L}$.
A quadrant of level $\ell$ has integer edge length $2^{L - \ell}$, and its
coordinates are integer multiples of this length.
We assume that a space filling curve (SFC) is defined that maps all possible
quadrants of a given level bijectively into an ordered set of curve indices
$\halfopen{0, 2^{d \ell}}$.
We may always embed this index into the space $\halfopen{0, 2^{d L}}$ by
left-shifting by $d(L - \ell)$ bits.
The level may be appended to the curve index to make the index unique across
all levels.

The order defined by the SFC must satisfy a locality property:
The children of a quadrant are called a family and have indices that come after
any predecessor and before any successor of their parent quadrant.
As a consequence, two quadrants are either related as ancestor and descendant,
meaning that the latter is contained in the former, or not intersecting at
all:  Partially overlapping quadrants do not exist.
Common choices of SFC are the Hilbert curve \cite{Hilbert91} and the Morton- or
$z$-curve \cite{Morton66} used in \pforest.
In fact, the algorithms in this paper are equally fit to operate on a
forest of triangles or tetrahedra, as long as its connectivity is well defined
and it is equipped with an SFC such as the one designed for the \tetcode
\cite{BursteddeHolke16}.


A forest is stored in a data object that exists on each participating process.
Most of its data members are local, that is, apply to just the process where
they are stored, while others are shared, meaning that their values are
identical between all processes.
The shared data is minimal such that it uniquely defines the parallel
partition.
We use linearized tree storage that only stores the leaves and ignores the
non-leaf nodes \cite{SundarSampathBiros08}.
The leaves are ordered in sequence of the trees, and within each tree in
sequence of the SFC order.
Sometimes we reference local data for the tree with global number $k$ inside a
forest object $s$ by $\Ttree = s.\rtrees[k]$.

The partition of leaves is disjoint, which allows us to speak of the owning
process of an element.
For convenience, the local data of each process includes the numbers of its
first and last non-empty trees.
The trees between and including its first and last are called its local trees.
The first and last trees of a process may be incomplete, in which case the
remaining elements belong to preceding processes for the first, and to
succeeding processes for the last local tree.
If a process has more than two trees, the middle ones must be complete.
If a process has elements of only one tree, its first and last tree are the
same.
In this case, if that tree is incomplete, its remaining elements may
be on processes both preceding and succeeding.
A process may also be empty, that is, have no elements, in which case it has no
valid first and last tree.

For each of its local trees, a process stores an offset defined by the sum of
local elements over all preceding trees, and the tree's boundaries by way of
its first and last local descendants.
The first (last) local descendant is the first (last) descendant of maximum
level $L$ of its first (last) local element in this tree.
For example, the first local descendant of a complete tree in Morton encoding
has coordinates $x_i = 0$,
while the last has coordinates $x_i = 2^L - 1$, $i \in \halfopen{0, d}$.
The local elements are stored in one flat array for each local tree.
Thus, the tree number for every local element is implicit.
Non-local elements are not stored.

The shared data of the forest is the array $\arrE[p]$, the sum of local elements
over all preceding processes, and the array $\arrm[p].(\rtree, \rdesc)$ of the
first local tree and local descendant, for every process $p$.
The first local descendant of a process is identical to the first local
descendant of its first tree.
Consequently, the array of first descendants is sufficient to recreate the
first and last local descendants $\Ttree.f$, $\Ttree.l$ of any tree local to
any process.
We call $\arrm$ the array of partition markers, since they define the partition
boundary in its entirety (see \figref{pmarkers}).

By design of the SFC, the entries of $\arrm$ are ascending first by tree and then
by the index of the first local descendant.
Whether a process begins
with a given tree and quadrant,
even if the quadrant is non-local and/or a branch, is trivial to check by
examining $\arrm$; see \algref{beginsquadrant} and its use from
\algref{processes} and \algref{treecount}.
%
%
\begin{figure}
\begin{center}
  \includegraphics[width=.9\columnwidth]{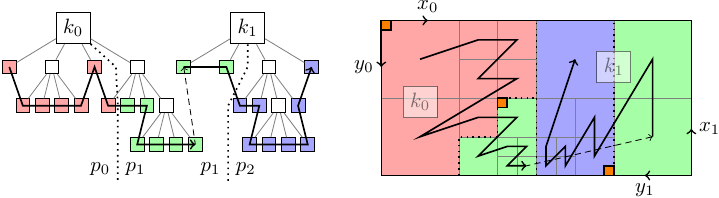}
\end{center}%
\caption{Sketch of a forest of $K = 2$ quadtrees $k_i = i$ (left) and
         the mesh it encodes (right).
         Each tree in the mesh has its own coordinate system that determines
         the order of elements along the space filling curve (black arrows).
         The forest is partitioned between $P = 3$ processes $p_j \equiv j$
         (color coded).
         The partition markers $\arrm[0, 1, 2]$ (orange) are quadrants of
         a fixed maximum level; we do not draw $\arrm[3]$.
         They correspond to the black dotted lines on the left
         that are sometimes called separators \cite{GriebelZumbusch00}.
         This forest is load balanced with cumulative element counts
         $\arrE = [0, 7, 15, 23]$.}%
\label{fig:pmarkers}%
\end{figure}%
\begin{algorithm}
  \caption{\beginsquadrant (process $p$, tree number $k$, quadrant $b$)}
  \alglab{beginsquadrant}
  Determine in $\cO(1)$ whether a hypothetical quadrant in a specific tree is
  the first quadrant on some process $p$.
  This function showcases use of the marker array $\arrm$.
  \newline
  \begin{algorithmic}[1]
    \REQUIRE $b$ is a quadrant in tree $k$
      \hfill\COMMENT{omit tree ``number'' from now on}
    \RETURN $\arrm[p] = (k, \text{first descendant of $b$})$
    \hfill\COMMENT{comparison yields true or false}
  \end{algorithmic}%
\end{algorithm}%

As stated above, the arrays $\arrE$ and $\arrm$ are available to each process,
a feature that is crucial throughout.
It has been found exceedingly convenient to store one additional element in
these zero-based arrays, namely $\arrE[P]$ and $\arrm[P]$.
Quite naturally, $\arrE[P]$ is the global number of elements, and the number of
elements on process $p$ is $\arrE[p + 1] - \arrE[p]$ for all $p \in
\halfopen{0, P}$.
Setting $\arrm[P]$
to the first descendant of the non-existent tree $K$
permits to encode any empty process $p$, including the last one,
by $\arrm[p] = \arrm[p + 1]$, that
is, by successive partition markers being equal in both tree and descendant.
If one or several successive processes are empty, we say that all of them begin
on the same tree and quadrant as the next non-empty process.
By design, \algref{beginsquadrant} returns true for all of them.

It follows from the above conventions that the array $\arrm$ contains
information on the ownership of trees as well:
\begin{property}
  \label{property:gfptreeownership}
    Not every tree needs to occur in $\arrm$.



    If $k$ occurs and the range of processes $[p, q]$ is widest such that
    \beginsquadrant ($p'$, $k$, $b$) for all $p \le p' \le q$ and the same $b$,
    and $p$ is the first satisfying this condition for any $b$,
    then the first descendant of tree $k$ is in the partition of either $p - 1$
    or $q$.
    More specifically, it is $q$ if and only if
    \beginsquadrant ($q$, $k$, $\rroot$).

    If $k$ does not occur in $\arrm$, then all of its quadrants are owned by
    the last process $p$ that satisfies $\arrm[p].\rtree < k$.
\end{property}

%
%
%

\section{Forest construction from sparse leaves}
\seclab{build}

In many use cases an application must construct a mesh for which only a small
subset of current elements is relevant:
\begin{itemize}
  \item
    To isolate elements of a given refinement level (and fill the gaps with the
    coarsest possible elements to complete the mesh), for example to implement
    multigrid or local time stepping.
  \item
    To postprocess only the mesh elements
    selected by a given filter (such as for writing to disk the data of one
    part of a much bigger model).
  \item
    A computation deals with points distributed independently of the element
    partition and varying strongly in density, and we seek to create a mesh
    representing the points.
  \item
    For parallel visualization, we want to process only the part of the mesh
    inside the view angle of a virtual camera.
\end{itemize}
Repeated coarsening addresses only some of these cases and is unnecessarily
slow when it does:
We would execute multiple cycles and carefully maintain data consistency
between them.
Coarsening may also be inadequate entirely, such as in the case of
points where we might want to create a highly refined element for each one,
potentially finer than in the original mesh.

The operation we need is akin to making a copy of the existing mesh (we will
keep the original and its data to continue with the simulation eventually), and
then executing multiple cycles of \rcp on the copy, all in a fast one-pass
design.
In particular, we want to avoid creating forest metadata or element storage and
discarding it again.

While some details in this section are new, the presentation is more of a
tutorial in working with our mathematical encoding of a parallel forest of
octrees.
This section also serves to introduce some subalgorithms required later.

\subsection{Algorithmic concept}
\seclab{buildalgo}

We propose the following procedure \pfbuild:
\begin{enumerate}
  \item
    Initialize an opaque context data structure from an existing source forest
    that will hide temporary working data (\algref{buildbegin}, \buildnew).
  \item
    Add leaves one by one, which need not exist in the source mesh (i.e., they
    may be coarser or finer) but must be contained in the local partition, and
    must be non-overlapping and their index non-decreasing relative to the ones
    added previously.
    These leaves can be sparse (that is, not contiguous in the order of the
    space filling curve; cf.\ \algref{buildadd}, \buildadd).
  \item
    Free the context, not before creating a new forest object as a result: It
    is defined as the coarsest possible forest (a) containing the added leaves
    and (b) respecting the same partition (\algref{buildend}, \buildend).
\end{enumerate}
The resulting forest has the same partition boundary as the source, thus the
above procedure satisfies \principleref{complementarity}.

One advantage is that the construction is process-local, with the caveat
that the result depends on the total number of processes.
However, since the result is a valid forest object, it can be subjected
to calls to \rcbs if so desired, and P in order to load balance it for its
special purpose.
Its number of elements may be smaller than that of the source, possibly by
orders of magnitude, significantly accelerating the computation downstream.

As a difference to
\cite{SundarSampathBiros08}, we use a source forest to guide the algorithm.
The monotonicitiy requirement, to add leaves in the order of the source index,
eliminates the linear-logarithmic runtime of a sorting step.
Monotonicity can be realized for example by iterating
through the existing leaves in the source, or by calling the top-down forest
traversal \pfsearch.
The latter approach has the advantage that the traversal can be pruned early to
skip tree branches of no interest, not accessing these source elements at all.

\pfbuild shares the property of \pfsearch that the serial version is
useful in itself, since the tasks mentioned above may well occur in a
single-process code.
The parallel version of \pfbuild is near identical to the serial one, with
the exception that the local number of leaves in the result, one integer, is
shared with \mpiallgather.
This is standard procedure in \pforest for refinement, coarsening, and 2:1
balance.
Apart from that, the algorithm is communication-free.


\subsection{Details description of \pfbuild}
\seclab{builddetails}

%
\begin{algorithm}
  \caption{\buildnew (source forest $s$) $\rightarrow$ context $c$%
    \hfill (collective call)}
  \alglab{buildbegin}
  This algorithm is the entry point to the sparse construction of a forest.
  We record a reference to the original forest and initialize
  context information to maintain state.
  \newline
  \begin{algorithmic}[1]
    \STATE $c \leftarrow$ new context storing reference to $s$
           and new and empty result forest $r$
    \IF{$s$ has elements on this process}
      \STATE \begintree ($c$, $s.\rfirstlocaltree$, $0$)
        \hfill\COMMENT{set $c.\rprev \leftarrow \rinval$}
    \ENDIF
  \end{algorithmic}
\end{algorithm}
We use a context data structure to track the internal state of building the new
forest from an ascending (and usually sparse) set of local leaves.
It is initialized by \buildnew (\algref{buildbegin}) and contains a copy of the
variables of the source forest that stay the same, most importantly the
boundaries of local trees plus the array of partition markers.
These copies become parts of the result forest at the end of the procedure.
In practice, redundant data may be avoided by copy-on-write.
The state information contains the number of the tree currently being visited
and a copy of the most recently added element, which serves to verify that a
newly added element is of a larger SFC index and not overlapping
(\algref{buildadd}, \lineref{buildaddrequire}).


In adding elements, we pass through the local trees in order.
When adding multiple elements to one tree, we cache them and postpone the final
processing of this tree until we see an element added to a higher tree for the
first time.
If at least one element has been added to the tree, we can rely on the
functions \pfcompletesubtree, originally built around a fragment of
CompleteOctree \cite[Algorithm 4, lines 16--19]{SundarSampathBiros08} and
reworked \cite{IsaacBursteddeGhattas12}, and \pfcompleteregion, a
reimplementation of the function CompleteRegion originally described for
\dendro \cite[Algorithm 3]{SundarSampathBiros08}.
Both functions are adapted to the multi-tree data structures of \pforest and
parameterized by the number of the tree to work on.

\begin{algorithm}
  \caption{\pfenlargefirst (quadrant $f$ is modified, quadrant $b$)}
  \alglab{enlargefirst}
  Given a quadrant $f$, we determine whether it is a first child and strictly
  contained in $b$.  If so, we turn it into its parent and repeat.
  This preserves the lower left corner of $f$.
  \newline
  \begin{algorithmic}[1]
    \REQUIRE $f$ is descendant of $b$ (i.e., equal to $b$ or a strict
             descendant of it)
    \STATE $w = f.x \bor f.y \bor f.z$
    \hfill\COMMENT{bitwise or; omit $z$ coordinate in 2D}
    \WHILE{$f.\ell > b.\ell$ \AND ($w \band 2^{L - f.\ell}) = 0$}
      \linelab{firstcomp}
      \STATE $f.\ell \leftarrow f.\ell - 1$
      \hfill\COMMENT{turn $f$ into parent;
                     valid due to $= 0$ comparison in \lineref{firstcomp}}
    \ENDWHILE
    \ENSURE $f$ has the same first descendant as on input
            and is still descendant of $b$
  \end{algorithmic}
\end{algorithm}
\begin{algorithm}
  \caption{\pfenlargelast (quadrant $l$ is modified, quadrant $b$)}
  \alglab{enlargelast}
  This algorithm is the complement to \algref{enlargefirst} in that
  the last (top right back) corner of $l$ is preserved during repeated
  enlargement of the quadrant.
  \newline
  \begin{algorithmic}[1]
    \STATE $\ell \leftarrow l.\ell$; $w = l.x \band l.y \band l.z$
    \hfill\COMMENT{bitwise and; omit $z$ coordinate in 2D}
    \WHILE{$l.\ell > b.\ell$ \AND ($w \band 2^{L - l.\ell}) \ne 0$}
      \STATE $l.\ell \leftarrow l.\ell - 1$
      \hfill\COMMENT{turn $l$ into parent;
                     requires \lineref{lastfix} to become well defined}
    \ENDWHILE
    \STATE $l.x \leftarrow l.x \band \neg (2^{L - l.\ell} - 2^{L - \ell})$
    \hfill\COMMENT{bitwise negation; repeat for $y$ (and $z$ in 3D)}
    \linelab{lastfix}
    \ENSURE $l$ has the same last descendant as on input
            and is still descendant of $b$
  \end{algorithmic}
\end{algorithm}
\begin{algorithm}
  \caption{\ndtree (context $c$) $\rightarrow$ element offset $o$}
  \alglab{endtree}
  In our loop over the local trees of the source forest, we transfer the
  temporary data recorded by adding sparse quadrants
  into the tree structure of the result forest.
  \newline
  \begin{algorithmic}[1]
    \STATE $\Ttree \leftarrow c.r.\rtrees[c.k]$
    \hfill\COMMENT{reference to result tree data
                  }
    \IF{$\Ttree.\relements = \emptyset$}
      \STATE $a \leftarrow$ \pfnearestcommon ($\Ttree.f$, $\Ttree.l$)
      \IF{$\Ttree.f$ is the first descendant of $a$ \AND
          $\Ttree.l$ is its last}
        \STATE $\Ttree.\relements \leftarrow \{ a \}$
        \hfill\COMMENT{tree consists of one element}
      \ELSE
        \STATE $f \leftarrow \Ttree.f$; $l \leftarrow \Ttree.l$
        \hfill\COMMENT{first and last local descendants of tree}
        \STATE $c \leftarrow$ child of $a$ containing $f$;
               \pfenlargefirst ($f$, $c$);
        \linelab{childfc}
        \hfill\COMMENT{modify $f$}
        \STATE $d \leftarrow$ child of $a$ containing $l$;
               \pfenlargelast ($l$, $d$);
        \linelab{childld}
        \hfill\COMMENT{modify $l$}
        \STATE \pfcompleteregion ($\Ttree$, $f$, $l$)
        \hfill\COMMENT{fill elements in $\Ttree$ from $f$ to $l$ inclusive}
      \ENDIF
    \ELSE
      \STATE \pfcompletesubtree ($\Ttree$)
       \hfill\COMMENT{fill gaps 
                      with coarsest possible elements}
    \ENDIF
    \RETURN $\Ttree.o + \# \Ttree.\relements$
  \end{algorithmic}
\end{algorithm}
In the event that no element has added to some local tree, we fill the range
between its first and last local descendants with the coarsest possible
elements.
To this end, we first generate the smallest common ancestor of the two
descendants, which contains the local portion of the tree.
If the tree descendants are equal to the ancestor's first and last descendants,
respectively, the ancestor is the tree's only element.
Otherwise, we identify the two (necessarily distinct) children of the ancestor
that contain one of the tree descendants each, and find the descendants'
respective largest possible ancestor that (a) has the same first (last)
descendant and (b) is not larger than the child.
We do this with \algref{enlargefirst} \pfenlargefirst and \algref{enlargelast}
\pfenlargelast, respectively.
We then call \pfcompleteregion with the resulting elements to fill the tree.

The finalization of a tree for the cases discussed above is listed in
\algref{endtree}.
The reader may notice that the logic in Lines~\ref{line:childfc} and
\ref{line:childld}, along with the enlargement algorithms, could be tightened
further by passing just the number $a.\ell + 1$ instead of the children $c$ and
$d$.
We omit such final optimizations in \pforest when not harmful to its
performance, since the information on the child quadrants is valuable for
checking the consistency of the code.

\begin{algorithm}
  \caption{\buildadd
           (context $c$, tree $k$, quadrant $b$, callback \Add)}
  \alglab{buildadd}
  Between \buildnew and \buildend we may add individual sparse quadrants
  that need neither be contiguous nor existing in the source forest.
  \newline
  \begin{algorithmic}[1]
    \REQUIRE $c.k \le k \le c.s.\rlastlocaltree$
    \hfill\COMMENT{adding element to same or higher tree}
    \WHILE{$c.k < k$}
      \STATE $o \leftarrow$ \ndtree ($c$)
      \hfill\COMMENT{finalize current tree, adding its elements to offset}
      \STATE \begintree ($c$, $c.k + 1$, $o$)
      \hfill\COMMENT{commence the next tree}
    \ENDWHILE
    \IF{$c.\rprev \ne \rinval$}
      \REQUIRE $c.\rprev$ less equal and not an ancestor of $b$
      \linelab{buildaddrequire}
      \IF{$c.\rprev = b$}
        \linelab{buildaddredundant}
        \RETURN
        \hfill\COMMENT{convenient exception allows for redundant adding}
      \ENDIF
    \ENDIF
    \STATE $\Ttree.\relements \leftarrow \Ttree.\relements
           \cup \lbrace b \rbrace$
    \hfill\COMMENT{sparse leaves in tree structure until finalized}
    \STATE $c.\rprev \leftarrow b$
      ; \Add ($b$)
      \hfill\COMMENT{optionally initialize application data}
  \end{algorithmic}
\end{algorithm}
We allow to call the \buildadd function repeatedly with the same element, which
is a convenience when using the feature of \pfsearch to maintain a list of
multiple points to search \cite{IsaacBursteddeWilcoxEtAl15}, several of which
may trigger the addition of the current element.
A new element may just as well be finer or coarser than the one in the source,
as long as it is added in order.
The element is added once, and we provide the convenience callback \Add to
establish its application data; see
\algref{buildadd}.

\begin{algorithm}
  \caption{\buildend (context $c$) $\rightarrow$ result forest $r$%
    \hfill (collective call)}
  \alglab{buildend}
  Tansfer the rest of the temporary context data into the result forest and
  finalize it.
  \newline
  \begin{algorithmic}[1]
    \IF{$c.s$ has elements on this process (else $n \leftarrow 0$)}
      \WHILE{$c.k < c.s.\rlastlocaltree$}
        \STATE \begintree ($c$, $c.k + 1$, \ndtree ($c$))
        \hfill\COMMENT{finalize and commence as above}
      \ENDWHILE
      \STATE local element count $n \leftarrow$ \ndtree ($c$)
      \hfill\COMMENT{we are done with the last local tree}
    \ENDIF
    \STATE $c.r.\rnumbers \leftarrow$ \mpiallgather ($n$)
    \hfill\COMMENT{one integer per process}
    \RETURN $c.r$
    \hfill\COMMENT{also free $c$'s remaining members and $c$ itself}
  \end{algorithmic}
\end{algorithm}



\section{Recursive partition search}
\seclab{traverse}

Frequently, points or geometrically more complex objects need to be located
relative to a mesh.
The task is to identify one or several elements touching, intersecting, or
otherwise relevant to that object.
There are varied examples of such objects and their uses:
\begin{itemize}
  \item
    Input/output:
    \begin{itemize}
      \item
    Earthquake point sources to feed energy into seismology simulations
      \item
    Sea buoys for measuring the water level in tsunami simulations
    \end{itemize}
  \item
    Numerical/technical:
    \begin{itemize}
      \item
    Particle locations in tracer advection schemes
      \item
    Departure points in a semi-Lagrangian method
    \end{itemize}
  \item
    Geometric shapes:
    \begin{itemize}
      \item
    Randomly distributed grains to construct a porous med\-ium
      \item
    Trapezoids that represent the field of view of a virtual camera
      \item
    Constructive solid geometry objects for rigid body interactions
    \end{itemize}
\end{itemize}
In the following, we refer to all those objects as points.
We distinguish three degrees of generality required depending on the
application.
\begin{enumerate}
  \item
Local:
When it suffices that each process shall identify strictly the points that are
inside its local partition, we may call \pfsearch
\cite[Algorithm 3.1]{IsaacBursteddeWilcoxEtAl15}
to accomplish this task
economically and communication-free.
  \item
Near:
The points are searched in a specified proximity around the local partition.
For example, in most numerical applications we work with direct neighbors in
the mesh.
Usually we collect one layer of ghost elements that encode the size, position,
and owner process of direct remote neighbors.
If the ghost elements are ordered by the SFC, they can be searched very much
like the local elements \cite{BursteddeWilcoxGhattas11}.
This principle can be extended to multiple layers of ghosts
\cite
[\pfghostexpand]
{GuittetIsaacBursteddeEtAl16}.
However, the number of ghost layers must be limited, since the number of ghost
elements collected on any given process cannot be much larger than the number
of local elements due to memory constraints.
  \item
Global:
Every process may potentially ask for the location of every point.
This variant is clearly the most challenging, 
since a naive implementation would cause $\cO(P^2)$ work and/or all-to-all
communication.
\end{enumerate}

This section is dedicated to develop a lean and general solution of the global
problem 3.
The main task of the new \algref{searchpartition}, \pfsearchpartition,
%
%
is to
identify which points match the local partition and which do not,
and in the latter case, which process(es) they match.
It will be advantageous to follow the forest structure top-down to reduce the
number of binary searches and to tighten their ranges as much as possible.
To avoid traversing the forest more than once, we use the top-down context over
all relevant points as a whole.
Given the metadata we hold for the forest, the algorithm is communication-free.


While an all-to-all parallel search is not expected to scale,
our approach is efficient when the application requires data that is
near in a generalized sense but not accessible by Local and Near searches.
If, for example, we search through a neighborhood in space that extends to
a small multiple of the width of a process domain,
such as in a large-CFL Lagrangian method,
we prune the search for the domain outside of the
neighborhood
and the procedure scales well.


\subsection{Idea of the recursion}
\seclab{traverseprinciple}

We know that the local part of the search can be executed using \pfsearch.
Assuming we remembered all points that do not match locally and run two nested
loops to search each of those points on every remote process, this would be
rather costly.
The alternative of sorting the coordinates of the points in order of the SFC
and comparing it with the partition markers is not applicable when the points
are extended geometric shapes.
Instead, we repurpose the idea behind \pfsearch and apply it to the partition
markers instead of the local quadrants.
This inspires a top-down traversal of the partition of the forest without
accessing any element (which would be impossible anyway, since remote
elements are generally unknown to a given process).

To illustrate the principle, consider a branch quadrant of a given tree and
assume that we know the process that owns its first local descendant and the
one that owns its last.
These two processes define the relevant window onto the array of partition
markers.
Hence, we are done if the first and last process are identical:
This is the owner of all leaves below the branch.
Otherwise, we split the branch quadrant into its $2^d$ children and look for
them in the window of partition markers using a multi-target binary search.
This gives us for each child its first and last process, which allows us to
continue this thought recursively, using each child in turn as the current
branch.

The above procedure has several useful properties.
First, to bootstrap the recursion, we execute a loop over all (importantly, not
just the local) trees since a point may exist in any tree.
The partition markers allow us to determine for each tree which processes own
elements of it.
The ascending order of trees, processes, and partition markers inherent in the
SFC allows us to walk through this information quickly.
Furthermore, a leaf can only have one owner process, which means that the
recursion is guaranteed to terminate on a leaf, if not before, even when this
leaf is remote and thus not known to the current process.

Second, we process all points in one common recursion, which combined with
per-point user decisions of whether it intersects the branch allows us to prune
the search tree early and only follow the relevant points further down.
Both the search window and the set of relevant points shrink
with increasing depth of the branch.
Finally, it is possible to do optimistic matching, meaning returning
matches for a point and more than one branch, which may allow for cheaper
match queries in practice.
Any sharp and more costly matching can be delayed if this is advantageous.
The motivation for this is quite natural in view of searching extended
geometric shapes that may overlap with more than one process partition.
We illustrate the process in \figref{recursion}.
%
\begin{figure}
\begin{center}
  \includegraphics[width=.4\columnwidth]{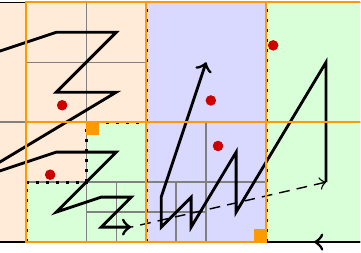}
  \hspace{3ex}
  \includegraphics[width=.4\columnwidth]{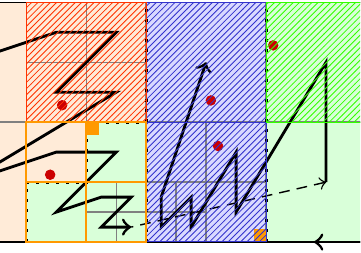}
  \\[2ex]
  \includegraphics[width=.4\columnwidth]{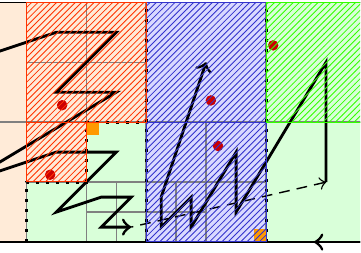}
  \hspace{3ex}
  \includegraphics[width=.4\columnwidth]{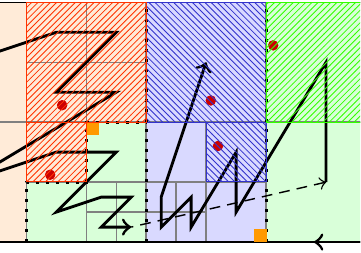}
\end{center}%
\caption{Steps of \pfsearchpartition on process $2$ (blue; cf.\
         \figref{pmarkers}), which must locate 5 points (red), 3 of which
         are not local.
         Top left: recursion at level 1 after 5 search
         quadrants (orange) have returned true in \Match.
         Top right: The red quadrant is remote and the recursion stops
         at the branch containing the point.
         The green quadrant is remote and coincides with a leaf,
         and the recursion stops.
         The two blue quadrants are local, one branch and one leaf,
         and the recursion stops.
         The recursion continues with 4 level 2 quadrants (orange).
         Bottom left: Final result; one search quadrant matches on the red
         process.
         Bottom right: Alternate result obtained by an extension
         akin to \pfsearch, making use of the local leaves
         as a convenience (blue hatched backwards).
}%
\label{fig:recursion}%
\end{figure}%

\subsection{Technical description of \pfsearchpartition}
\seclab{traversetechnical}


As outlined in \secref{encoding}, \pforest stores one partition marker per
process that contains the number of its first tree.
To find the processes relevant for each tree, we need to reverse this map.
In principle, we could run one binary search per tree to find the smallest
process that owns a part of it.
Instead of doing this and spending $K \log P$ time, we can exploit the
ascending order of both trees and processes, and the fact that the range
of processes for a tree is contiguous, to run the combined and optimized
multi-target search \scsplit presented in \cite{IsaacBursteddeWilcoxEtAl15}.
We restate the precise convention for its input and output parameters in
\algref{scsplit}.
\begin{algorithm}
        \caption{\scsplit (input array $\arra$,
                           offset array $\arrO$,
                           number of types $T$)}
  \alglab{scsplit}
  Interface to multi-objective binary search over a cumulative array
  developed in \cite{IsaacBursteddeWilcoxEtAl15}.
  \newline
  \begin{algorithmic}[1]
    \REQUIRE $\arra$ is sorted ascending by some type $0 \le \arra[i].t < T$
      (repetitions allowed)
    \REQUIRE $\arrO$ has $T + 1$ entries to be computed by this function
    \ENSURE The positions $i$ of $\arra$ that hold entries of type $t$ are
            $\arrO[t] \le i < \arrO[t + 1]$
    \ENSURE If there are no entries of type $t$ in $\arra$, then
            $\arrO[t] = \arrO[t + 1]$
  \end{algorithmic}
\end{algorithm}


To create the map from tree to process, we use the partition markers $\arrm$ as
input array $\arra$.
We exploit the fact that it has $P + 1$ entries and there is $P'$ minimal such
that $\arrm[p'].\rtree = K$ for all $p' \in [P' , P]$.
Usually, we have $P' = P$, but me way also encounter the case $P' < P$ if the
final range of processes $p \in [P' , P)$ has no elements und hence no trees.
Designating the tree number of the partition marker as the type for \scsplit,
we see that we must specify $T = K + 1$ types and the offset array $\arrO$ must
have $K + 2$ entries.
\algref{scsplit} gives us
\begin{equation}
  \arrO[0] = 0
  , \quad
  \arrO[K] = P' \le P
  , \quad \text{and} \quad
  \arrO[K + 1] = P + 1
  .
\end{equation}%

%
%
%
Now, running the loop over all trees $0 \le k < K$, we need to determine the
first and last processes $p_\rfirst$, $p_\rlast$ owning elements of tree $k$.
We know for a fact that
\begin{equation}
  p_\rlast = \arrO[k + 1] - 1
  .
\end{equation}%
This can be seen since $p_\rlast \ge \arrO[k + 1]$ would mean that $p_\rlast$
could not have any elements of trees $k$ and less.
And if there were a $p'$ with $p_\rlast < p' < \arrO[k + 1]$, then $p_\rlast$ would
not be the last process of tree $k$.
%
%
%
%
%
%
%
To determine $p_\rfirst$, we distinguish the cases of (a) no process beginning
in this tree, (b) a process begins at its first descendant, and (c) a process
begins elsewhere in $k$.
We name this algorithm \texttt{processes} (\algref{processes}) and call it
with the the type $t = k$ and the root quadrant of the tree.
\begin{algorithm}
  \caption{\texttt{processes}
          (offset array $\arrO$, type $t$, quadrant $b$)
          $\rightarrow$ ($p_\rfirst$, $p_\rlast$)%
  }
  \alglab{processes}
  Given an offset array of process ranks classified by some type (depending on
  the calling context, for example the first tree of a process or the first child
  index relative to a search quadrant), determine the tightest inclusive range
  of processes of this type.
  \newline
  \begin{algorithmic}[1]
    \REQUIRE By context, $b$ is a quadrant in some tree $k$
    \STATE $p_\rlast \leftarrow \arrO[t + 1] - 1$
      \hfill\COMMENT{this value is final}
    \STATE $p_\rfirst \leftarrow \arrO[t]$
      \hfill\COMMENT{initialization}
      \IF{$p_\rfirst \le p_\rlast$
          \AND
            \beginsquadrant ($p_\rfirst$, $k$, $b$)}
        \WHILE{$p_\rfirst$ is empty}
          \STATE
            $p_\rfirst \pplus$
            \hfill\COMMENT{empty processes use same type as their successor}
        \ENDWHILE
      \ELSE
        \STATE
          $p_\rfirst \mminus$
          \hfill\COMMENT{there must be exactly one
                         earlier process for this type}
      \ENDIF
    \ENSURE Range $[p_\rfirst, p_\rlast]$ is widest s.t.\ each end has at
            least one item of type $t$
  \end{algorithmic}
\end{algorithm}

We show the toplevel call \pfsearchpartition in \algref{searchpartition}.
For clarity, we have excluded the local search of points (covered in detail in
\cite{IsaacBursteddeWilcoxEtAl15}) and reduced the presentation to the search
over the parallel partition.
Since it does not communicate, it can be called by any process at any time.
It identifies the relevant processes for each tree in turn as discussed above
and then invokes the recursion for each tree.
The recursion keeps track of the points to be searched by a user-defined
callback function \Match.
This callback is passed the range of processes relevant for the current branch
quadrant and may return false to indicate an early termination of the
recursion.
The points and the callback to query them do not need to relate to invocations
on other processes.
\begin{algorithm}
\caption{\pfsearchpartition
        (point set $Q$,
         callback \Match)%
  }
\alglab{searchpartition}
Toplevel interface for generalized partition search.
We use the multi-objective binary search to bootstrap the range of
relevant trees, and we make sure that the range of processes per tree
is tight (Algorithms \ref{alg:scsplit} and \ref{alg:processes}) before
entering the per-tree recursion.
\newline
\begin{algorithmic}[1]
\STATE \scsplit ($\arrm$, $\arrO$, $K + 1$)
\hfill\COMMENT{split partition markers $\arrm$ by their tree number}
\FORALL{tree numbers $0 \le k < K$}
  \STATE 
         $a \leftarrow \rroot$
  \hfill\COMMENT{construct toplevel quadrant to begin}
  \STATE ($p_\rfirst$, $p_\rlast$) $\leftarrow$
         \texttt{processes} ($\arrO$, $k$, $a$)
  \hfill\COMMENT{potential owners of quadrants in $k$}
  \STATE \texttt{recursion} ($a$, $p_\rfirst$, $p_\rlast$, $Q$, \Match)
  \hfill\COMMENT{bootstrap recursion for tree $k$}
\ENDFOR
\end{algorithmic}
\end{algorithm}

%
The recursion is detailed in \algref{searchrecursion}.
Each step takes a branch quadrant $b$ and the first and last processes that own
elements of it.
If they are the same, this is the owner of all elements below $b$ and the
recursion ends.
Otherwise,
the task is to find the first and last processes $p_{i,\rfirst}$ and
$p_{i,\rlast}$ for each child $c_i$ of $b$.
Here we use \scsplit with an input array that is the minimal window on the
markers, defined by
\begin{equation}
  \eqnlab{Adefinerecursion}%
  \arra[j] = \arrm[p_\rfirst + 1 + j]
  \quad\text{for}\quad
  0 \le j < \Delta p = p_\rlast - p_\rfirst
  .
\end{equation}%
This ensures that all elements of $\arra$ refer to processes beginning inside $b$.
We set their type to the number of the child of $b$ in which they begin, which
fixes $T = 2^d$ and yields
\begin{equation}
  \arrO[0] \ge 0
  ,
  \quad
  \arrO[2^d] = \Delta p
  ,
  \quad\text{and}\quad
  p_{i,\rlast} = \arrO[i + 1] + p_\rfirst
  .
\end{equation}%
If we want to repurpose
\texttt{processes}
to determine $p_{i,\rfirst}$ and
$p_{i,\rlast}$, we need to make sure that the offset array indexes into processes,
which we accomplish by adding $p_\rfirst + 1$ to each of its elements
(\lineref{Opluspfirst}
) to correct for the window selection
\eqnref{Adefinerecursion}.
\begin{algorithm}
\caption{\texttt{recursion}
        \newline\mbox{}\hfill
        (quadrant $b$,
         processes $p_\rfirst$, $p_\rlast$,
         point set $Q$,
         callback \Match)%
  }
\alglab{searchrecursion}
Recursion step of the partition search.
It is allowed for the callback to match without knowing whether this
match is final.
Note that we traverse each subtree at most once.
\newline
\begin{algorithmic}[1]
\REQUIRE By context, $b$ is a quadrant in some tree $k$
\REQUIRE The first descendant of $b$ is owned by $p_\rfirst$,
         its last by $p_\rlast$
\STATE Set of matched points $M \leftarrow \emptyset$
\FORALL{$q \in Q$}
  \IF{\Match ($b$, $p_\rfirst$, $p_\rlast$, $q$)}
    \STATE $M \leftarrow M \cup \{ q \}$
    \hfill\COMMENT{determine the points that we keep}
  \ENDIF
\ENDFOR
\IF{$M = \emptyset$ \OR $p_\rfirst = p_\rlast$}
  \RETURN since all matches failed and/or
          all quadrants below $b$ belong to $p_\rfirst$
\ENDIF
\STATE \scsplit ($\arrm[p_\rfirst + 1, \ldots, p_\rlast]$, $\arrO$, $2^d$)
\hfill\COMMENT{split by child id relative to $b$}
\FORALL{$c_i \in \texttt{Children}$ ($b$), $0 \le i < 2^d$}
  \STATE ($p_{i,\rfirst}$, $p_{i,\rlast}$) $\leftarrow$
         \texttt{processes} ($\arrO + p_\rfirst + 1$, $i$, $c_i$)
  \hfill\COMMENT{owning descendants of $c_i$}
  \linelab{Opluspfirst}
  \STATE \texttt{recursion} ($c_i$, $p_{i,\rfirst}$, $p_{i,\rlast}$, $M$, \Match)
  \hfill\COMMENT{pursue remaining points to bottom}
\ENDFOR
\end{algorithmic}
\end{algorithm}

\section{Partitioning and parallel I/O}
\seclab{partindep}

This section introduces parallel algorithms that support
partition-independent storage of the mesh and the communication of application
data between an old and new parallel partition.
One guideline that we follow throughout is the following.
\begin{principle}[partition independence]
  \label{principle:arbitrarympisize}
On writing, the organization and contents of file(s) written for a given state
of data shall be independent of the parallel partition of the simulation.
On reading, any number of processes shall be suitable to read such a file
(provided that the total memory available is sufficient).
\end{principle}

Partition independence is a valuable idea for a multitude of reasons, such as:
\begin{enumerate}
\item
Data is often transferred to a different computer for post-processing, having a
different number of processors and a different runtime/batch system.
\item
The scalability of post-processing algorithms is usually less than that of
simulation algorithms.
\item
We would like to make regression-testing, reproduction and post-processing least
restrictive and most convenient further down the data processing chain.
\end{enumerate}


\subsection{Writing element counts per tree}
\seclab{pertree}





The element counts per tree are a partition-invariant property of a global
forest mesh and thus important to define a concise and complete mesh I/O format.
However, they are not maintained within the internal state of the parallel
forest, which prompts us to develop a dedicated algorithm to compute them using
minimal time and communication.





Let us first consider the (simpler) situation of a one-tree forest.
If we were to include $P$ and the arrays $\arrm$ and $\arrE$ in the mesh file,
it would not be partition independent.
Thus, the only header information permitted is the global element count $N =
\arrE[P]$.
In practice, it is written by the first process, but any other process would be
able to write the header as well.
For each element we store its coordinates $x_i$ and the level, which are of
fixed size $s$.
The window of the mesh file to be written by process $p$ is
\begin{equation}
  \eqnlab{filewindow}
  \text{size of header }+ s \times \halfopen{ \arrE[p], \arrE[p + 1]}
  ,
\end{equation}
which is easily done in parallel using the MPI I/O standard.
On reading, each process learns the values $p$ and $P$ from the MPI environment
and reads the header to learn $N$.
This is sufficient to compute a new array $\arrE$ \cite[equation
(2.5)]{BursteddeWilcoxGhattas11}, which is in turn sufficient to read the local
elements from the file by \eqnref{filewindow}.
The first element read fixes the local partition marker $\arrm[p]$, while
an empty process sets it to an invalid state.
The partition markers are shared by one call to \mpiallgather and examined once
to repair the invalid entries due to empty processes.

For a multi-tree forest, we encounter two additional tasks.
The first is writing the number of trees and their connectivity to the file,
for which we exploit the fact that the connectivity is known to each process in
the current \pforest design.
The second task is deeper:
When reading the window of local elements, it is not known which tree(s) they
belong to.
Of course, we may store the tree number in each element, but this would be
redundant and add some dozen percent to the file size.
One way to encode the tree assignment of elements efficiently is to
postulate an array $\arrN$ of cumulative global element counts over trees and
to include it in the header.

Let $N_k > 0$ be the global number of elements in tree $k$ that is generally
not available from the distributed data structure.
Our goal is to compute these counts and encode them in a cumulative array $\arrN$
with $K + 1$ non-decreasing entries,
\begin{subequations}
\begin{gather}
  \eqnlab{cumulativeN}.
  \arrN[k'] = \sum_{k = 0}^{< k'} N_k
  ,
  \quad 0 \le k' \le K
  \qquad\Rightarrow
  \\
  \arrN[0] = 0
  ,
  \qquad
  \arrN[k + 1] - \arrN[k] = N_k
  \:,
  \qquad
  \arrN[K] = \sum_{k = 0}^{< K} N_k = N
  .
\end{gather}%
\end{subequations}%
This format is convenient in facilitating binary searches through the results.
Note that any $N_k$ may be greater equal $2^{32}$ and thus requires 64 bits of
storage.

We aim to avoid the communication and computation cost $\cO( K P )$
of a naive implementation, i.e., one that has every process count the elements
in every tree.
Our proposal is to define a unique process responsible for computing the
element count in any given tree,
and to minimize communication by sending at most one
message per process to obtain the counts.
This shall hold even if a process is responsible for more than one tree.
Multiple conventions are thinkable to decide on the responsible process, where
we demand that the decision is made without communication.
We also demand that all pairs of sender and receiver processes are decided
without communication.
One suitable choice is the following.
\begin{convention}
  \label{convention:treecounter}
  The process $p$ responsible for computing the number of elements in tree $k$,
  which we denote by $p_k$, is the one that owns the first element in $k$,
  unless more than one process has the first descendant of tree $k$ as their
  partition marker.
  In the latter case, we take $p_k$ as the first process in that set, which is
  necessarily empty.
\end{convention}

This convention ensures that the range of trees that a process is responsible
for is contiguous (or empty).
In addition, it guarantees that $k < k'$ implies $p_k \le p_{k'}$.
Allowing for empty processes to be responsible fixes $p_0 = 0$ in all cases.
\begin{property}
  \label{property:emptyonetree}
  An empty process is responsible for at most one tree.
\end{property}
\begin{proof}
  If an empty process were responsible for two different trees, both would have
  to occur in its partition marker, which is impossible by definition.
\end{proof}

Let us proceed by listing the phases of the algorithm $\arrN \leftarrow
\pfpertree$.
\begin{enumerate}
\item
  Determine for each process $p$ the number of trees
  that it is
  responsible for,
  \begin{equation}
    K_p = \# \{ k : p_k = p \}
    , \qquad
    0 \le K_p \le K
    .
  \end{equation}%
  We may additionally define an array $\arrK$ of cumulative counts,
  \begin{equation}
    \eqnlab{treecountsoffsets}
    \arrK[p'] = \sum_{p = 0}^{< p'} K_p
    \qquad\Rightarrow\qquad
    \arrK[0] = 0
    , \qquad
    \arrK[P] =
    K
    .
  \end{equation}%
  Due to the design of the partition markers and \conventionref{treecounter},
  every process populates these arrays identically in $\cO(\max\{ K, P \})$
  time, requiring no communication.
  We provide \algref{treecount} to detail this computation.
\begin{algorithm}
  \caption{\texttt{responsible} (computes arrays of tree counts $(K_p)$,
                                 tree offsets $\arrK$)}
  \alglab{treecount}
  Compute which processes are responsible for counting elements in which trees.
  This is a process-local preparation for establishing the global per-tree
  element counts.
  \newline
  \begin{algorithmic}[1]
    \STATE
           $p \leftarrow 0$;
           $k \leftarrow 0$;
           $K_0 \leftarrow 0$
    \LOOP
      \ENSURE Process $p$ is the minimum of all $p'$ with
              $k = \arrm[p'].\rtree$ (cf.\ \propertyref{gfptreeownership})
      \ENSURE Responsibility for $k$ has been assigned to either $p$ or $p - 1$
      \REPEAT
        \STATE
           $p \pplus$;
           $K_p \leftarrow 0$
        \hfill\COMMENT{find the first process that begins in a later tree}
      \UNTIL{$\arrm[p].\rtree > k$}
      \STATE
           $k \pplus$
        \hfill\COMMENT{proceed to that tree incrementally}
      \WHILE{$k < \arrm[p].\rtree$}
        \STATE
           $K_{p - 1} \pplus$;
           $k \pplus$
          \hfill\COMMENT{while assigning in-between trees}
      \ENDWHILE
      \IF{$k = K$}
        \STATE
          $K_{p'} \leftarrow 0$ \textbf{forall} $p' \in [p + 1, P)$;
          \textbf{break loop}
          \hfill\COMMENT{assign remaining slots}
        \ELSIF{
               \beginsquadrant ($p$, $k$, $\rroot$)%
        }
          \STATE $K_{p} \pplus$
            \hfill\COMMENT{it is legal if $p$ is empty}
        \ELSE
          \STATE $K_{p - 1} \pplus$
            \hfill\COMMENT{$p - 1$ is never empty}
        \ENDIF
%
%
    \ENDLOOP
    \STATE Compute
           $\arrK$ from $(K_p)$ by \eqnref{treecountsoffsets}
  \end{algorithmic}
\end{algorithm}
\item
  While the previous step is identical on all processes, let us now take the
  perspective of an individual process $p$ with $K_p > 0$.
  It must obtain the number of elements in each of the $K_p$ trees it is
  responsible for and store the result, say, in an array $\arrn$ of the same
  length.
  We initialize each slot with the number of process-local elements in that
  tree,
  \begin{equation}
    \Ttree_i = \rtrees \left\lbrack \arrK[p] + i \right\rbrack
    , \quad
    \arrn [i] = \# \Ttree_i.\relements
    , \quad
    \text{for all $i \in \halfopen{ 0, K_p }$.}
  \end{equation}%
  \begin{proposition}
    \label{proposition:allbutfinal}
    The counts in all but the last element of $\arrn$ are final,
    \begin{equation}
      \arrn[i] = N_k
      \quad\text{for all $k - \arrK[p] = i \in \halfopen{ 0, K_p - 1 }$.}
    \end{equation}%
  \end{proposition}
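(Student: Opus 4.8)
The plan is to show that every tree $k$ that process $p$ is responsible for, except the highest-numbered one, lies entirely inside $p$'s partition, so that the process-local count $\#\Ttree_i.\relements$ already equals the global count $N_k$. First I would dispose of the trivial cases: if $K_p \le 1$ the index set $\halfopen{0, K_p - 1}$ is empty and there is nothing to prove, so I may assume $K_p \ge 2$. By \propertyref{emptyonetree} an empty process is responsible for at most one tree, hence $K_p \ge 2$ forces $p$ to be non-empty; this observation is exactly what will let me invoke ownership below. Since responsibility is contiguous and monotone in the tree index (as noted after \conventionref{treecounter}), the trees assigned to $p$ are precisely those numbered $k = \arrK[p] + i$ for $i \in \halfopen{0, K_p}$, matching the indexing $\Ttree_i = \rtrees[\arrK[p] + i]$ used to initialize $\arrn$.

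Next I would fix any $i \in \halfopen{0, K_p - 1}$ and set $k = \arrK[p] + i$, so that the successor tree $k + 1 = \arrK[p] + i + 1$ is still in $p$'s responsible range. The key step is to translate ``$p$ is responsible for $k$'' into ``$p$ owns the first element of $k$.'' By \conventionref{treecounter}, the responsible process is the owner of the first element of the tree, unless several processes share the first descendant of $k$ as their partition marker, in which case the responsible process is necessarily empty. Since $p$ is non-empty, that exception cannot apply to it, and I conclude that $p$ owns the first element of both $k$ and $k + 1$.

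Finally I would close the argument using the contiguity of the partition. The elements of process $p$ occupy a single contiguous SFC interval $\halfopen{\arrE[p], \arrE[p + 1]}$ in the global ordering, whereas the elements of tree $k$ form the block lying between the first element of $k$ and the first element of $k + 1$, since leaves are stored in sequence of trees. Because $p$ owns both bracketing elements and its interval is contiguous, it owns the entire intervening block, that is, all $N_k$ elements of tree $k$. Hence $\arrn[i] = \#\Ttree_i.\relements = N_k$, which is the claim.

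I expect the main obstacle to be the careful handling of the empty-process exception in \conventionref{treecounter}: without it one cannot pass from ``responsible'' to ``owns the first element,'' and the contiguity argument would not go through. \propertyref{emptyonetree} is precisely the lever that removes this obstacle, by confining every empty process to at most a single responsible tree. This is also the reason the proposition asserts finality only for the non-last entries of $\arrn$: the highest tree of $p$ may genuinely be shared with succeeding processes, and its count can be completed only after the communication performed in the subsequent phase.
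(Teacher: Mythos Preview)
Your proposal is correct and follows essentially the same approach as the paper's proof: dispose of the empty-process case via \propertyref{emptyonetree}, then argue that a non-empty responsible process owns the first element of each of its trees, so all but the last are complete on $p$. Your version is simply more explicit about why non-emptiness rules out the exceptional clause of \conventionref{treecounter} and about the SFC-contiguity step that turns ``owns both first elements'' into ``owns the whole tree in between,'' points the paper leaves implicit.
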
%
  \begin{proof}%
    If process $p$ is empty, it is responsible for at most one tree, $K_p \le
    1$, so there is nothing to prove.
    Otherwise, it owns the first element of every tree it is responsible for.
    This means that all but the last one of these trees are complete on $p$ and
    their number of local elements is also their global number of elements.
  \end{proof}
\item
  \label{itemmpiirecv}
  It remains to determine the number of remote elements in the last tree
  $k = \arrK[p + 1] - 1$ that a process is responsible for.
  They are necessarily located on higher processes.
  First, we add the elements of the subsequent processes that begin and end
  in this same tree.
  Identifying these processes is best expressed as a C-style code snippet:
  \begin{equation}
    \eqnlab{incempties}
    \text{\textbf{for}
          ($q \leftarrow p + 1$;
           $q < P$ \textbf{and} $K_q = 0$;
           $q \pplus$)
          \texttt{\{\}}}
  \end{equation}
  The addition itself is quick by using the cumulative element counts,
  \begin{equation}
    \arrn_\Delta = \arrE[q] - \arrE[p + 1]
    ,
  \end{equation}
  where we benefit from the convention that $\arrE[P] = N$.
  If the process $q$ that the loop \eqnref{incempties} ends with begins on the
  next highest tree, it does not contribute elements to $k$, and we set
  $\arrn_q = 0$.
  This condition applies as well if there are no more processes, $q = P$, due
  to the definition of $\arrm[P]$.
  Otherwise, $k$ is $q$'s first local tree, and we require $q$ to send a message
  that contains its local count of elements in this tree, which $p$ receives as
  $\arrn_q$.
  Either way, the final element count is obtained by the update
  \begin{equation}
    \arrn[K_p - 1] \leftarrow \arrn[K_p - 1] + \arrn_\Delta + \arrn_q
    .
  \end{equation}
\item
  We have seen above that some processes are required to send a message
  containing the count of local elements in their first local tree to a lower
  process.
  By the reasoning in~\ref{itemmpiirecv}., the processes that send a message
  are precisely those that are responsible for at least one tree and own at
  least one element in a preceding tree.
  The condition for process $p$ being a sender is thus
  \begin{equation}
    \eqnlab{conditionsend}
    K_p > 0
    \quad \wedge \quad
    \arrm[p].\rtree < \arrK[p]
    .
  \end{equation}
  What is the receiving process?  Again, the answer is a short loop:
  \begin{equation}
    \eqnlab{decempties}
    \text{\textbf{for}
          ($q \leftarrow p - 1$;
           $K_q = 0$;
           $q \mminus$)
         \texttt{\{\}}}
  \end{equation}
  \begin{property}
    \label{property:underrun}
    It is guaranteed that the loop does not underrun $q = 0$.
  \end{property}
  \begin{proof}
    The initialization is safe due to \eqnref{conditionsend}, which implies that
    a sender always satisfies $p > 0$.
    Furthermore, if all preceding processes had $K_{p'} = 0$, then $p$ would be
    responsible for tree $k = 0$, which would contradict \eqnref{conditionsend}.
  \end{proof}
\item
  \label{itemallgatherv}
  At this point, every process has computed $\arrn$, the global count of
  elements in every tree that it is responsible for.
  If such distributed knowledge suffices for the application, we may stop here.
  If it should be shared instead, we can reuse the arrays $(K_p)$ and $\arrK$
  to feed one call to \mpiallgatherv (they have the correct format by design).
  The amount of data gathered is one long integer per tree, thus the total data
  size is $K$ times 8 bytes.
\end{enumerate}
Computing the cumulative counts $\arrN$ from the freshly established values
$N_k$ is straightforward by \eqnref{cumulativeN}, assuming that the final phase
\ref{itemallgatherv} is executed to share $(N_k)$ between all processes.
The algorithm \pfpertree does work of the order $\cO (\max \{ K, P \})$, where
the constant is negligible since the computations are rather minimalistic.
What is more important is that we send strictly less than $\min \{ K, P \}$
point-to-point messages between known ranks, all of them carrying one integer,
and each process being sender and/or receiver of at most one message.
We expect such a communication to be fast.


Going back to our original motivation to store and load partition-independent
forest files, may may add that, mathematically speaking, we could skip phase 5
and delegate the writing of $\arrN$ to parallel MPI I/O.
In practice, however, it is simpler and quite probably quicker to execute phase
5 and have rank zero write all of $\arrN$ into the file header, since it writes
the rest of the header anyway.

\subsection{Data transfer on repartitioning}
\seclab{transfer}

Like all \pforest algorithms, \pfbuild and \pfsearchpartition are agnostic of
the application.
They provide callbacks \Add and \Match as a convenient way for the application
to access and modify per-element data.
By its original design, the \pforest implementation manages a per-element
payload of user-defined size, which is convenient for storing flags or other
application metadata.
This data is preserved during \rcbs for elements that do not change,
and may be reprocessed by callbacks for elements that do.
The data is sent and received transparently during partition P, which means
that it persists throughout the simulation.
However, we do not recommend to store numerical data via the payload mechanism,
since this memory is expected to fragment progressively by adaptation.
It will be more cache efficient to allocate a contiguous block of memory that
is accessed in sequence of the local elements
\cite{BursteddeBurtscherGhattasEtAl09}, either as an array of structures or as
multiple arrays.
Such memory is allocated in application space, and so far there is no general
function to transfer it when the forest is partitioned.
In the following, we
outline algorithms to accomplish this for fixed and variable per-element
data sizes, respectively.


As described in \secref{encoding}, the partition of the forest is stored by the
markers $\arrm$ and the local element counts $\arrE$.
If we consider a forest before and after partition (an operation that adheres
to \principleref{complementarity}), the only difference between the two forests
is in the values of the partition markers and the assignment of local elements
to processes.
To determine the MPI sender and receiver pairs, we compare the element counts
$\arrE$ before and after but may ignore all other data fields inside the forest
objects.
The messages sizes follow from $\arrE$ as well.
Thus, the fixed size data transfer is algorithmically similar to the transfer
of elements during partitioning.
We refer to this operation as
\begin{equation*}
  \text{\pftransferfixed
    ($\arrE$ before/after, data array before/after, data size).}
\end{equation*}
Note that it is possible to split it into a begin/end pair to perform
computation while the messages are in transit.
In practice, we proceed along the lines of \algref{fixedrepartitionscheme}.
%
\begin{algorithm}
  \caption{fixed size data transfer (forest $f$, data $d_\rbefore$)
           $\to$ data $d_\rafter$}
  \alglab{fixedrepartitionscheme}
  One recommeded procedure to repartition per-element data in application memory.
  \newline
  \begin{algorithmic}[1]
    \STATE $\arrE_\rbefore \leftarrow f.\arrE$
      \hfill\COMMENT{deep copy element counts before partition}
      \linelab{fixedone}
    \STATE \pfpartition ($f$)
      \hfill\COMMENT{modify members of forest in place}
    \STATE $\arrE_\rafter \leftarrow f.\arrE$
      \hfill\COMMENT{reference counts after partition}
      \linelab{fixedtwo}
    \STATE $d_\rafter \leftarrow$ allocate fixed size data ($f$)
      \hfill\COMMENT{layout known from forest}
     \STATE \pftransferfixed
       ($\arrE_\rbefore$,
        $\arrE_\rafter$,
        $d_\rbefore$,
        $d_\rafter$,
        size (element data)%
        )
    \STATE free ($d_\rbefore$) ; free ($\arrE_\rbefore$)
      \hfill\COMMENT{memory no longer needed}
  \end{algorithmic}
\end{algorithm}




When the data size varies between elements, we propose to store the sizes in an
array with one integer entry for each local element.
As with the fixed size, the data itself is contiguous in memory in ascending
order of the local elements.
A non-redundant implementation calls the fixed size transfer with the
array of sizes to make the data layout available to the destination processes.
With this information known, the memory for the data after partition is
allocated in another contiguous block and the transfer for the data of variable
size executes.
We have implemented this generalized communication routine as
\begin{equation*}
  \text{\pftransfervariable
    ($\arrE$ before/after, data before/after, sizes before/after).}
\end{equation*}
Thus, we pay a second round of asynchronous point-to-point communication for
the benefit of code simplicity and reuse.
Alternatively, it would be possible to rewrite the algorithm using a polling
mechanism to minimize wait times at the expense of CPU load.
The listing for the combined partition and transfer is
\algref{variablerepartitionscheme}.
\begin{algorithm}
  \caption{variable size data transfer
    \newline\mbox{}\hfill
    (forest $f$, data $d_\rbefore$, sizes $s_\rbefore$)
    $\to$ (data $d_\rafter$, sizes $s_\rafter$)}
  \alglab{variablerepartitionscheme}
  When application data size varies by element, we propose to repartition
  just the size information first, as if this were user data, which is then
  sufficient to call the variable-payload transfer function.
  Both rounds use point-to-point messages with known receivers, ranks, and
  sizes, which optimizes buffer space and eliminates setup.
  \newline
  \begin{algorithmic}[1]
    \STATE
      \hfill\COMMENT{%
      partition as in \algref{fixedrepartitionscheme},
      Lines~\ref{line:fixedone}--\ref{line:fixedtwo}}%
      \hfill\mbox{}
    \STATE $s_\rafter \leftarrow$ allocate array of sizes ($f$)
      \hfill\COMMENT{layout known from forest}
    \STATE \pftransferfixed
      ($\arrE_\rbefore$,
       $\arrE_\rafter$,
       $s_\rbefore$,
       $s_\rafter$,
       size (integer)%
      )
    \STATE $d_\rafter \leftarrow$ allocate variable size data ($f$, $s_\rafter$)
    \STATE \pftransfervariable
      ($\arrE_\rbefore$,
       $\arrE_\rafter$,
       $d_\rbefore$,
       $d_\rafter$,
       $s_\rbefore$,
       $s_\rafter$)
    \STATE  free ($s_\rbefore$) ; free ($d_\rbefore$) ; free ($\arrE_\rbefore$)
      \hfill\COMMENT{memory no longer needed}
  \end{algorithmic}
\end{algorithm}



\subsection{Reversing the communication pattern}
\seclab{notify}

Standard element-based numerical methods lead to a symmetric communication
pattern, that is, every sender also receives a message and vice versa.
The data sent per element is most often of fixed size, thus every process is
able to specify the message size in a call to say \mpiirecv.
In other applications,
the communication pattern may no longer be symmetric, which means that the
receiver processes have to be notified about the senders.

Pattern reversal can be understood as the transposition of the sender-receiver
matrix, which is an operation available from parallel linear algebra packages;
see e.g.\ \cite{MirzadehGuittetBursteddeEtAl16}.
When trying to minimize code dependencies, we may ask about an efficient way
to code the reversal ourselves.
A parallel algorithm based on a binary tree has been discussed in
\cite{IsaacBursteddeGhattas12}.
Without going into detail, we propose an extension that uses an $n$-ary tree,
where the number of children at each level is configurable, to reduce the
depth and thus the latency of the operation.
The branching of the tree can be configured to match any NUMA/multicore
achitecture.
Futhermore, we have extended this algorithm to carry a payload without
intreasing the number of messages, which is useful to communicate the message
sizes to the receivers.
We will refer to this algorithm as \narynotify.

\section{Demonstration: parallel particle tracking}
\seclab{particles}


To exercise the algorithms introduced above, we present a particle tracking
application.
The particles move independently of each other by a gravitational attraction
to several fixed-position suns, following Newton's laws.
Each particle is assigned to exactly one quadrant that contains it and, by
consequence, to exactly one process.
The mesh dynamically adapts to the particle positions by enforcing the rule
that each element may contain at most $E$ particles.
If more than this amount accumulate in any given element, it is refined.
If the combined particle count in a family of leaves drops below $E / 2$, they
are coarsened into their parent.
The features used by this example are:
\begin{itemize}
  \item Explicit Runge-Kutta (RK) time integration of selectable order:
        We use schemes where only the first subdiagonal of RK coefficients
        is nonzero, thus we store just one preceding stage.
        This applies to explicit Euler, Heun's methods of order 2 and 3 and the
        classical RK method of order 4.
  \item Weighted partitioning \cite{BursteddeWilcoxGhattas11}:
        Each quadrant is assigned the weight approximately proportional to
        the number of particles it contains.
        This way the RK time integration is load balanced between the processes.
  \item Partition traversal (\secref{traverse}):
        In each RK stage, the next evaluated positions of the local particles
        are bulk-searched in the partition.
        If found on the local process, we continue
        a local search to find its next local owner quadrant.
        If found on a remote process, we send it to that process for the
        next RK stage.
  \item Reversal of the communication pattern (\secref{notify}):
        A process does not know from which processes it receives new particles,
        thus we call the $n$-ary notify function to determine the \mpiirecv
        operations we need to post.
  \item Variable-size parallel data transfer on partitioning
        (\secref{transfer}):
        Since the amount of particles per quadrant varies, we send
        variable amounts of per-element data from the old owners to the new.
  \item Construction of a sparse forest (\secref{build}):
        At selected times of the simulation, we use a small subset of particles
        to build a new forest, where each of the selected particles is placed
        in a quadrant of a given maximal level.
        The rest of this forest is filled with the coarsest possible quadrants.
        Depending on the setup, it has less elements and is
        thus better suited for offline post-processing or visualization.
  \item Partition-independent I/O (\secref{pertree}):
        We compute the cumulative per-tree element counts for both the
        current and the sparse forests.
\end{itemize}

\subsection{Simulation setup}
\seclab{simsetup}

The problem is formulated in the 3D unit cube $[0, 1]^3$.
We mesh it with one tree except where explicitly stated.
If a particle leaves the domain, it is erased, thus the global number may drop
with time.
The three suns are not moving.
The initial particle distribution is Gau\ss{}-normal.
Each particle has unit mass and initial velocity $0$ and the gravitational
constant is $\gamma = 1$; see \tabref{suns} for details.
\begin{table}
\begin{center}
\begin{tabular}{ccc|c}
  $x$ & $y$ & $z$ & mass \\
  \hline
  .48 & .58 & .59 & .049 \\
  .58 & .41 & .46 & .167 \\
  .51 & .52 & .42 & .060 \\
\end{tabular}
\hspace{5ex}
\begin{tabular}{c|c}
  \multicolumn{2}{c}{particle distribution (Gau\ss)} \\
  \hline
  center & $\mu = (.3, .4, .5)$ \\
  standard deviation & $\sigma = .07$ \\
\end{tabular}
\end{center}
\caption{The three suns (left)
         and the parameters of the initial particle distribution (right).}%
\label{tab:suns}%
\end{table}%


The parameters of a simulation include the global number of particles, the
maximum number $E$ of particles per element, minimum and maximum levels of
refinement, the order of the RK method, the time step $\Delta t$ and the final
simulated time $T$.


The initial particle distribution and mesh are created in a setup loop.
Beginning with a minimum-level uniform mesh, we compute the integral of the
initial particle density per element and normalize by the integral over the domain.
We do this numerically using a tensor-product two-point Gau\ss{} rule.
From this, we compute the current number of particles in each element, compare
it with $E$ and refine if necessary.
After refinement, we partition and repeat the cycle until the loop terminates
by sufficient refinement or the specified maximum level is reached.
Only then we allocate the local particles' memory and create the particles
using per-element uniform random sampling.
Thus, neither the global particle number nor their distribution is met exactly,
but both approach the ideal with increasing refinement.

%
%

To make the test on the AMR algorithms as strict as possible, the parallel
particle redistribution and the mesh refinement and partitioning occur once in
each stage of each RK step.
We choose the time step $\Delta t$ proportional to the characteristic element
length to establish a typical CFL number.
Thus, we may create a scaling series of increasing problem size (that is,
particle count and resolution) at fixed CFL.
Our non-local particle transfer is designed to support arbitrarily large CFL,
where the amount of senders and receivers for each process effectively depends
on the CFL only, even if the problem size is varied by orders of magnitude.

We run each series to a fixed final time $T$, which produces a certain
distribution of the particles in space (see \figref{plot7r4b}).
\begin{figure}
\begin{center}
  \includegraphics[width=.9\columnwidth]{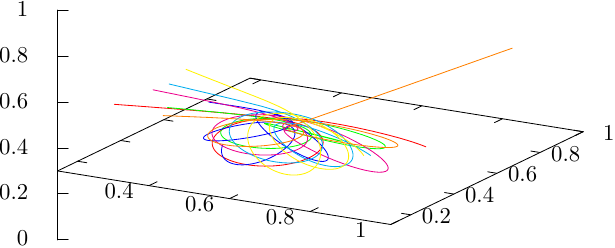}
\end{center}
\caption{
  Trajectory of seven out of 44 particles tracked to time $T = 2$ with
  the fourth-order RK method and $\Delta t = .003$.
  The initial positions of the particles are visible on the left hand side.}%
\label{fig:plot7r4b}%
\end{figure}%
The number of time steps required doubles with each refinement level.
To allow for a meaningful comparison between different problem sizes,
we measure the wall clock times for the RK method and all parallel algorithms
in the final time step, averaging over the RK stages.
We compute the per-tree element counts and the sparse forest at selected times
of the simulation (see \figref{plot7cut}), where we only use the timing of the
last one at $T$.
\begin{figure}
\begin{center}
  \includegraphics[width=.43\columnwidth]{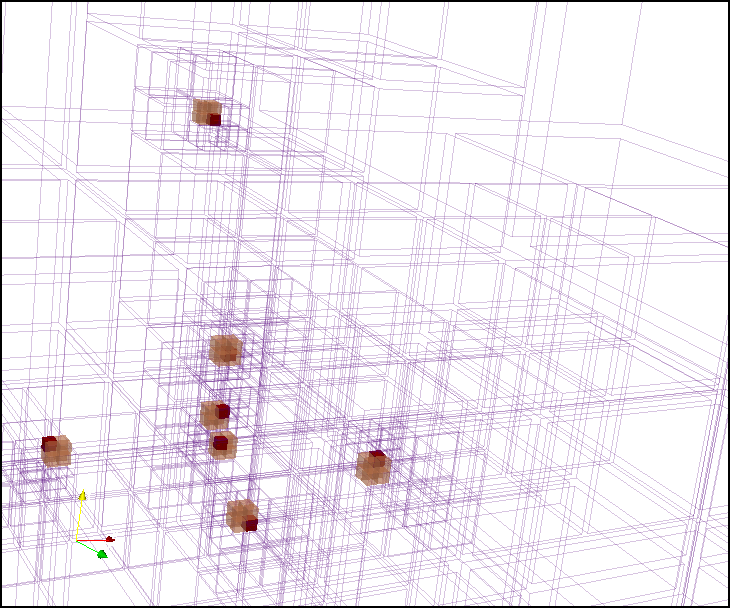}
  \hspace{2ex}
  \includegraphics[width=.43\columnwidth]{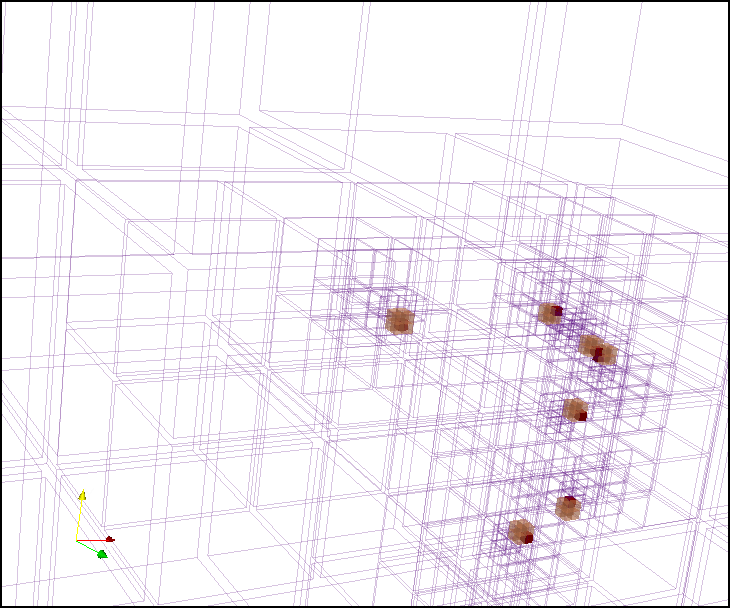}
\end{center}
\caption{Zoom into sparse forests created at time $t = 0$ (left) and $t = .5$
  (right), respectively, using the the same setup as for \figref{plot7r4b},
  here with $\Delta t = .002$.
  Of the 44 particles tracked, the same seven are added to both sparse forests
  as individual level-8 elements (cf.\ \algref{buildadd}).
  Elements up to level 6 are drawn as blue wireframe, level 7 as
  transparent orange and level 8 as solid red.}%
\label{fig:plot7cut}%
\end{figure}%



We use three problem setups of increasing overall particle count and CFL, which
we run to $T = .4$ with the 3rd order RK scheme (see \tabref{setup}).
We use process counts from 16 to 65,536 in multiples of eight, which matches the
multiplier of the particle counts and in consequence that of the element
counts.
The computations reported in this section have been run on Juqueen, an IBM
BlueGene/Q system with 28 racks and a total of 28,672 16-core, 16 GB compute
nodes (458,752 CPU cores) of type IBM PowerPC A2 (1.6 GHz) connected by a
5D-torus network with a dedicated collective subnet \cite{Juqueen}.
\begin{sidewaystable}
\begin{center}
\begin{tabular}{c|ccc|ccc|ccc}
         & \multicolumn{3}{c|}{particles}
         & \multicolumn{3}{c|}{elements}
         & \\
  levels & \#req & \#eff & \#end &
           level & \#eff & \#end &
           $\Delta t$ & \#steps & \#peers
  \\
  \hline
  3--9  &    12800 &    13318 &    12917 &
                 7 &     8548 &    12762 &  .008 &  50 & 5.31 \\
  5--11 &   819200&    852580 &   842250 &
                 9 &   538392 &   784848 &  .002 & 200 & 6.96 \\
  7--13 & 52428800 & 54513360 & 54283090 &
                11 & 34418420 & 49821220 & .0005 & 800 & 7.12 \\
  \hline
  3--9  &    102400 &    102374 &     98359 &
                  6 &      1632 &      2059 & .016 &  25 & 8.06 \\
  5--11 &   6553600 &   6553472 &   6424887 &
                  8 &    102068 &    131587 & .004 & 100 & 11.5 \\
  7--13 & 419430400 & 419415934 & 416393854 &
                 10 &   6530210 &   8700623 & .001 & 400 & 11.1 \\
  \hline
  4--10 &     5120000 &     5119830 &     4935040 &
                    8 &       55434 &       79486 & .016 & 25 & 10.7 \\
  6--12 &   327680000 &   327677582 &   321323140 &
                   10 &     3528876 &     6366600 & .004 & 100 & 22.6 \\
  8--14 & 20971520000 & 20971146506 & 20820237439 &
                   12 &   225760858 &   353507330 & .001 & 400 & 19.7 \\
\end{tabular}
\end{center}
\caption{Three problem sizes, each run with process counts from 16 to 65,536
         in multiples of 8.  We only show every other run (multiple of 64).
         The problems have maximum particle counts per element $E$ of 5 and
         twice 320, respectively.
         For each run, we provide the specified minimum and maximum levels, the
         particle counts referring to the initial request, the count
         effectively reached on initialization, and the count at $T = .4$,
         respectively.
         Over time, we lose some particles that leave the domain.
         For the elements, we show the initial maximum level and
         global count and the count at final time $T$.
         Over time, we create more elements since they move
         closer together at $t = .4$, which leads to a deeper tree.
         On the right, we show the time step size, number of steps,
         and the average number of communication peers for particle transfer.
         The CFL number increases between the three problem sizes,
         which can bee seen by comparing the levels with $\Delta t$,
         and reflects in \#peers.
         The overall largest run creates 20.97 billion particles.}%
\label{tab:setup}%
\end{sidewaystable}%

\subsection{Load balance}
\seclab{simload}

We know that \pforest has a fast partitioning routine to equidistribute the
elements between the processes \cite{BursteddeHolke16b}.
Here we need to equidistribute the load of the RK time integration, which is
proportional to the local number of particles.
To this end, we assign each element a weight $w$ for partitioning that derives
from the number of particles $e$ in this element, $w = 1 + e$.
We offset the weight by 1 to bound the memory used by elements that contain
zero or very few particles.
We test the load balance by measuring the RK integration times in a weak
and strong scaling experiment.
From \figref{plotrk} we see that scalability is indeed close to perfect.
%
%
\begin{figure}
\begin{center}
\includegraphics[width=.49\columnwidth]{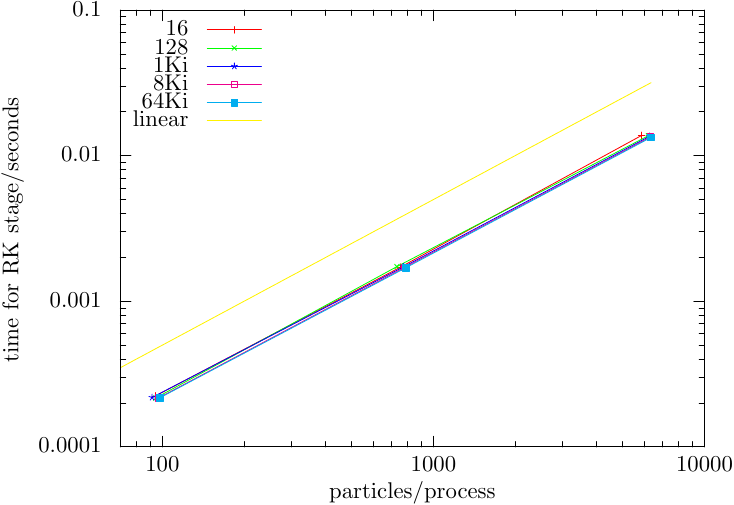}
\hfill
\includegraphics[width=.49\columnwidth]{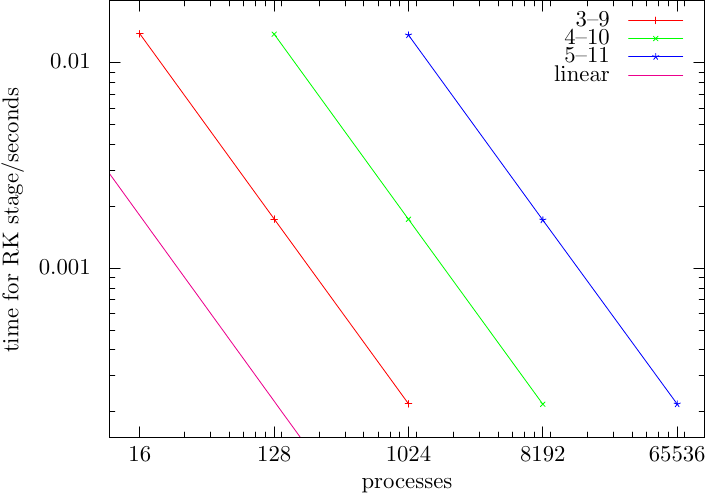}
\end{center}
\caption{Scaling of the Runge-Kutta time integration.
  We use the mid-size problem from \tabref{setup} and rerun each line
  with $8\times$ and $64\times$ processes (equivalently, rerun the
  $8\times$ and $64\times$ smaller problems with the same process count),
  hence three dots per line.
  Left: The number of MPI processes is color-coded.
  We confirm optimal weak scaling since the dots lie on top of each other
  and optimal strong scaling by the fact that the lines lie on top of each
  other and have unit slope.
  Right: A typical strong scaling diagram, indicating simulation size by the
  levels of refinement.
  These plots indicate successful load balance by the particle-weighted
  partitioning of elements.
}%
\label{fig:plotrk}%
\end{figure}%


A weight function that counts both elements and particles in some ratio has
been proposed before
\cite{GassmoellerLokavarapeHeienEtAl18},
as has the
initialization of particles based on integrating a distribution function.
In the above reference, parallelization is based on a one-element ghost layer.
The use of algorithms like ours for non-local particle transfer and variable
data, as we describe it below, has not yet been covered as far as we know.

\subsection{Particle search and communication}
\seclab{simcomm}

We use the top-down forest tra\-ver\-sal \algref{searchpartition},
\pfsearchpartition, augmented with a local search to determine for each local
particle whether it changes the local element or leaves the process domain.
In the first case, we find this element, and in the latter case, we find which
process it is sent to.
Once we know this, we reverse the communication pattern using \narynotify to
inform the receivers about the senders and send the particles using
non-blocking MPI.
We are not using one-sided MPI, since synchronization is often slower than
messaging itself \cite{GerstenbergerBestaHoefler18}, which would defeat the
purpose in our case.

Moving particles between elements is followed by mesh coarsening and
refinement, which generally upsets the load balance, so we repartition the
forest.
This changes an individual element's ownership, and thus the contained
particles' ownership, from one process to another.
Thus, we transfer the particles a second time, now from the old to the new
partition.
%
We use the two-stage \algref{variablerepartitionscheme}, where we first
send the number of particles for each element (fixed-size message volume per
element) and then send the particles themselves (variable-size volume).

According to our measurements, \narynotify has runtimes well below or around
1~ms for the small- and mid-size problems.
The large problem gives rise to runtimes of about 5~ms.
The fixed-size particle transfer is a sub-millisecond call.
\begin{figure}
\begin{center}
\includegraphics[width=.49\columnwidth]{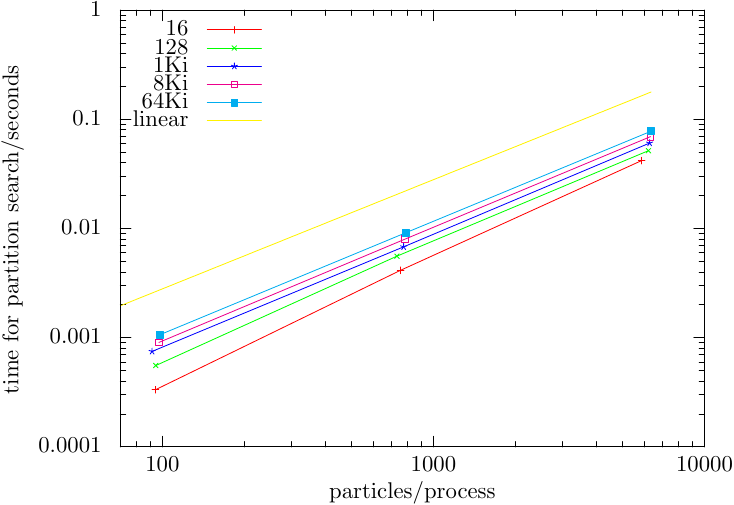}
\hfill
\includegraphics[width=.49\columnwidth]{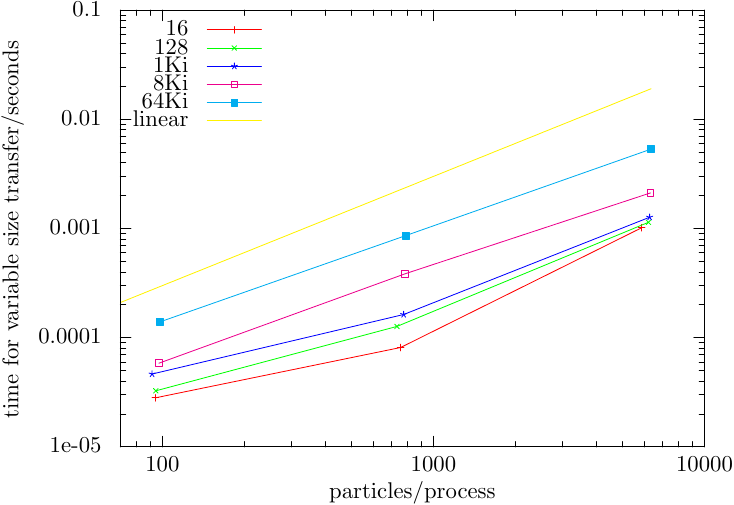}
\end{center}
\caption{Combined partition and local search (left) and transfer of
         variable-size element data (right) for the mid-size problem, where the
         runtimes are measured in the final time step.}%
\label{fig:searchtrans}%
\end{figure}%
Runtimes of the remaining calls \pfsearchpartition and \pftransfervariable
for the mid-size problem are displayed in \figref{searchtrans}.
Their scalability is generally acceptable given their small absolute runtimes.
\begin{table}
\begin{center}
\begin{tabular}{c|c|c|c}
  $P$ & small & medium & large \\
  \hline
     16 & 9.29e-3 & 41.9e-3 & 3.12 \\
    128 & 10.5e-3 & 51.6e-3 & 3.63 \\
   1024 & 11.6e-3 & 60.6e-3 & 4.13 \\
   8192 & 12.8e-3 & 69.4e-3 & 4.62 \\
  65536 & 13.9e-3 & 77.9e-3 & 5.10
\end{tabular}
\\[2ex]
\begin{tabular}{c|cccc}
  $P$ / $K$ & 1 & 8 & 64 & 512 \\
  \hline
16    & 9.29e-3 & 9.05e-3 & 13.9e-3 & 58.4e-3 \\ %
1024  & 11.6e-3 & 11.4e-3 & 16.3e-3 & 61.8e-3 \\ 
65536 & 13.9e-3 & 13.7e-3 & 18.8e-3 & 66.2e-3 \\ 
\end{tabular}
\end{center}
\caption{Top:
         Absolute runtimes in seconds of \pfsearchpartition augmented with a
         local search for the three problem sizes from \tabref{setup}.
         Each column presents a weak scaling exercise, where ideal times would
         be constant.
         The three runs have comparable rates between 60k and 82k particles per
         second.
         Bottom:
         We use a forest with $K$ trees in a cubic brick layout,
         where the refinement in each tree is reduced accordingly to make the
         meshes identical (shown for the small problem).
         For roughly a hundred trees and above the run times increase with $K$
         while remaining largely independent of the process count $P$.
}%
\label{tab:prob3searchp}%
\end{table}%
The runtimes of \pfsearchpartition for all three problem sizes are compared in
\tabref{prob3searchp}.
They grow by less than a factor of 2 in weak scaling while increasing the
process and particle counts by more than three orders of magnitude.
In this test, we also experiment with forest meshes of up to $K = 2^{d \times
B}$ trees, where $B$ runs from 0 to 3 and per-tree minimum and maximum levels
decrease by $B$, which keeps the meshes identical independent of $K$.
Since the forest connectivity is unstructured, the limit of many trees loses
the hierarchic property of the mesh, which reflects in a slower search.
Up to 512 trees we see search times of less than 1/10th seconds for the small
problem.
For $1/8$th of the large problem (not shown in the table), the search times
increase by a factor between 7 and 10 from 1 to 512 trees (.32 seconds on $K =
1$, $P = 16$ to 3.43 seconds on $K = 512$, $P = 64\mathrm{Ki}$).

\subsection{Sparse forest and per-tree counts}
\seclab{simbuild}

At the end of the simulation, we create a sparse forest for output and
post-processing.
We use every 100th particle for the small size problem and every 1000th particle
for the medium and large size problems; let us call this factor $R \ge 1$.
The ratio of $E$ and $R$ and the specified maximum level determine the size of
the sparse forest.
If the maximum level is high, we create a deeper forest and more elements
compared to the simulation.
If $E/R$ is one, we keep the number of elements roughly the same, if it is
less than one, the sparse forest will have less elements.
These two effects may offset each other.
In our examples, the sparse forest is smaller in the small-scale problem and
larger in the mid- and large size problems.
The build times of the largest run for each problem setup are
4.8~ms for the small, 20.5~ms for the medium, and 358~ms for the large size
problem, each obtained with 65,536 MPI processes.
Especially for the two larger problems, we have much less elements than
particles, such that the number of elements per process is in the aggressive
strong scaling regime.


The global per-tree counting of elements has runtimes below or around 1~ms
except for the runs on 65,536 processes, where it is 4.4~ms for all three
problem setups (using one tree).
When reproducing the same mesh with a brick forest of as much as 512 trees,
the run times do not change in any significant way.
Since the messages are sent concurrently (the algorithm avoids daisy-chaining),
this is achieved by design.
This function has been tested in even more varied situations by the community
for several years (transparently through \pfsave).


%

\section{Demonstration: constructing random spheres}
\seclab{spheres}

Particles, as considered in the previous demonstration, have extent zero and
are only stored on one process at any given time.
Now let us consider objects of non-trivial extent, for example geometric
objects such as spheres.
Depending on refinement and partition, and the individual spheres' properties,
some or even most spheres cover a region in space that is split between several
processes.
We require the global search proposed in \secref{traverse}, the variable-size
data transfer from \secref{transfer} and the pattern reversal from
\secref{notify} to construct this model in parallel.

Our objective is to create multiple spheres based on a pseudorandom
distribution and to refine the mesh at the spheres' boundaries.
The mesh refinement shall be reproducible and partition-independent.
Such a setup may model a porous medium, where the spheres represent obstacles
whose surface must be accurately resolved for a flow simulation in the empty
space.
Conversely, the spheres may be hollow and we increase their density to simulate
percolation, treating the empty space as the solid matrix.
Lastly, problems exist when only the surface of the spheres is of interest,
for example in visualization applications \cite{Burstedde18b}.

\subsection{Construction procedure}
\seclab{sphere-construct}

The construction has three parts.
First, each process creates a certain number of spheres with centers inside its
partition.
For these spheres, the process becomes the current owner.
Second, all remote processes intersecting an owned sphere's surface are
determined by the partition search, the spheres' metadata is transferred to
each such process, which we then use to decide where to refine the mesh.
After refinement, we discard the copies.
Third, we repartition the mesh, and the current owner of a sphere sends its
metadata to its new owner.
Steps two and three are repeated in a loop over increasing refinement levels to
ensure that both the computational load and the memory consumption are well
balanced.

We define a probability distribution of the spheres $\rho$ depending on their
radius.
To make all radii equally likely in a given volume, let
\begin{equation}
  \eqnlab{spheredist}
  \rho (r) = c / r^3 , \quad r_\rmin \le r \le r_\rmax,
  \quad \text{and $0$ otherwise}.
\end{equation}
Normalization makes $c$ an expression in $r_\rmin$ and $r_\rmax$, and the
expected volume is
\begin{equation}
  V_E = \int_{r_\rmin}^{r_\rmax} \frac43 \pi r^3 \rho (r) \dr
      = \frac43 \pi \frac{r_\rmin^2 r_\rmax^2}{(r_\rmin + r_\rmax) / 2}
      = \frac43 \pi \frac{r_\rgeom^4}{r_\rarith}
  .
\end{equation}
To enforce an overall volume density $q$, an element of volume $V_e$ must have
\begin{equation}
  N_e = q V_e / V_E
\end{equation}
spheres on average, which we realize by sampling the number of constructed
spheres separately for each element from a Poisson distribution with mean
$N_e$.
We sample the center coordinates of each sphere uniformly in the element and
draw its radius from \eqnref{spheredist}.
This process is independent between elements, the only issue being the
initialization of the random number generator in parallel.
We resolve it by seeding the generator anew for each element with this
elements' lower left octree coordinates, which makes the distribution
reproducible and partition-independent.
\begin{figure}%
  \begin{center}
  \includegraphics[height=52mm]{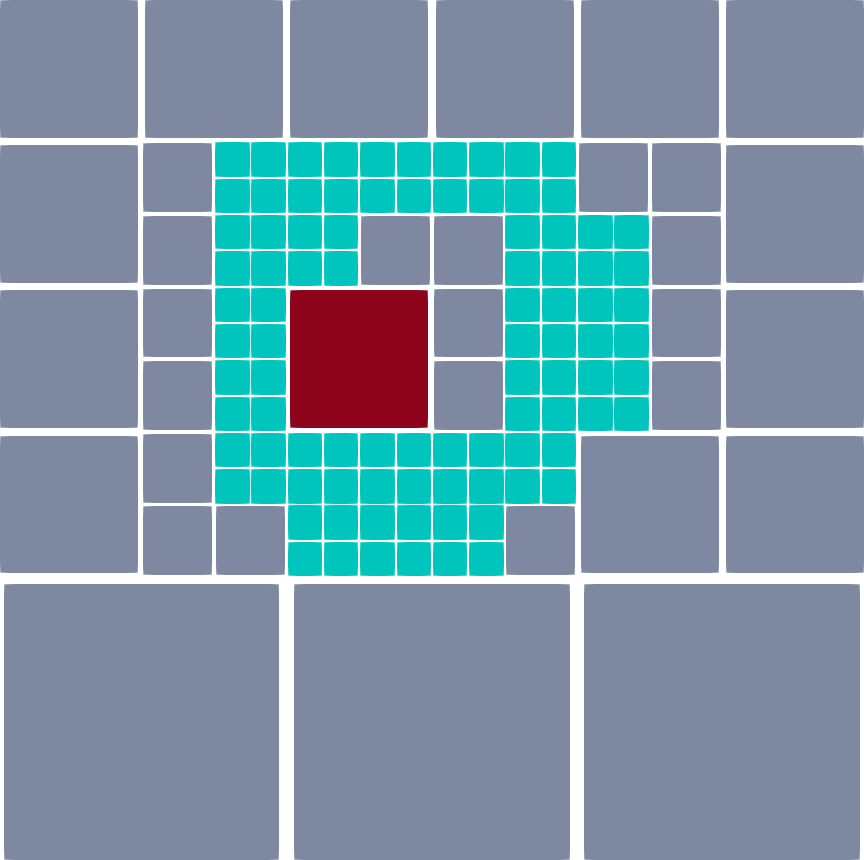}%
  \hfil
  \includegraphics[height=52mm]{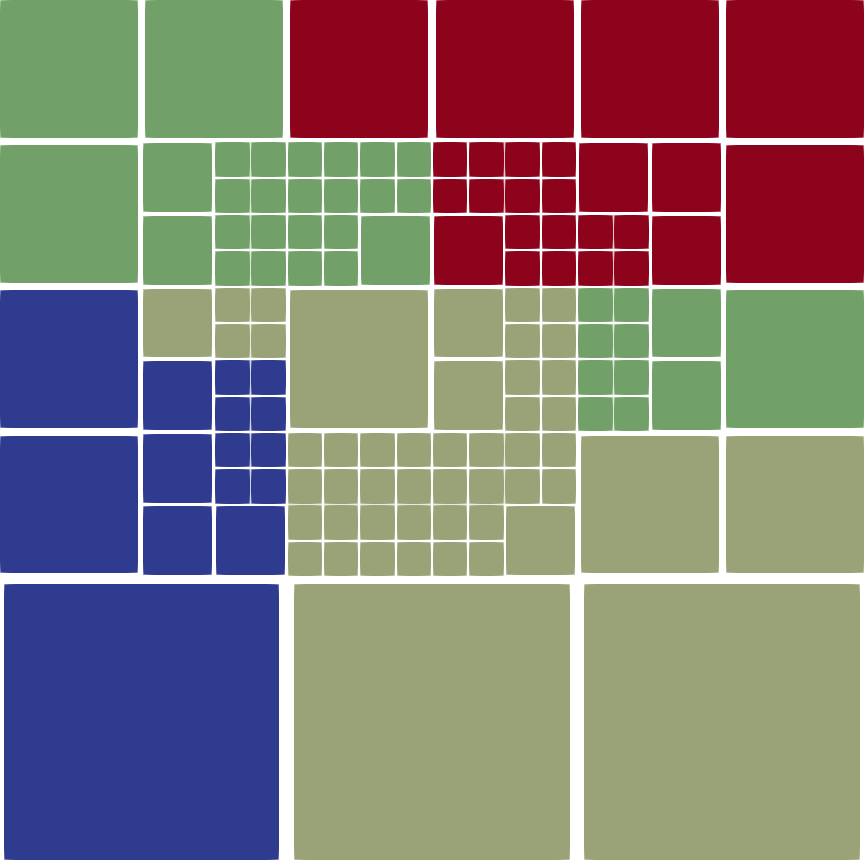}%
  \end{center}
  \caption{Pseudo-random, reproducible, and partition-independent resolution of
           spheres (zoom into 2D example).
           On the left, we see the element intersecting the sphere's center
           (red) and the refinement to its surface (green).
           On the right, we color by four MPI processes used and note that
           the refinement is non-local relative to the sphere's owner quadrant
           and process.
           The refinement remains the same when using different numbers of
           processes (not shown).
           We do not enforce 2:1 balance; an element is refined if its
           parent intersected a sphere's surface at some time during the
           refinement loop.
           See \cite{Burstedde18b} for a continued discussion of this example
           in the context of in-situ visualization.%
  }%
  \figlab{sphere-construct}%
\end{figure}
\figref{sphere-construct} shows a typical construction along with the mesh.

\subsection{Computational experiments}
\seclab{sphere-timing}

A scaling series can be devised by relating $r_\rmin$ and $r_\rmax$ to the
minimal and maximal refinement levels and the MPI process count.
If we divide both radii by two, the expected volume shrinks and the total
number of spheres grows, each by a factor of eight in 3D.
If, at the same time, we increase both minimal and maximal levels $\ell_\rmin$,
$\ell_\rmax$ by 1, we achieve a constant ratio of spheres to elements.
A weak scaling series emerges if we scale the number of processes by multiples
of eight.

There is one parameter left to choose, namely the desired number of elements
$s$ spanning a sphere's radius.
This parameter stops the refinement for larger spheres earlier than for
smaller ones and ensures that the ratio of spheres to elements stays truly
constant.
We show some typical numbers for $s = 4$ in \tabref{spheres-s4} on up to
24,576 processes of the new Juwels supercomputer at JSC.
We have used the ``standard'' compute nodes of Juwels, comprising dual Intel
Xeon Platinum 8168 processors with 24 2.7 GHz cores each at 2 GB RAM per core,
connected by EDR-Infiniband \cite{JuwelsWeb}.
\begin{table}
  \begin{center}
    \begin{tabular}{r|rrrr|rr|rr}
      \multicolumn{1}{c|}{$P$} & $\ell_\rmin$ & $\ell_\rmax$
                               & $r_\rmin$ & $r_\rmax$
                               & \multicolumn{1}{c}{spheres}
                               & \multicolumn{1}{c}{elements}
                               & \multicolumn{1}{c}{$T_\mathrm{pt}$[s]}
                               & \multicolumn{1}{c}{$T_\mathrm{l}$[s]}
                               \\
      \hline
      3,072 & 5 & 12 & 488u & 63m & 4.23M & 1.50G &  8m &  20m \\
            & 6 & 13 & 244u & 31m & 33.8M & 12.0G & 40m & 147m \\
      \hline
     24,576 & 7 & 14 & 122u & 16m & 271M & 96.0G &  47m & 153m \\
            & 8 & 15 & 61u & 7.8m & 2.16G & 768G$^\dag$ & $\ast$ & $\ast$ \\
            & 9 & 16 & 31u & 3.9m & 17.3G & $\ast$ & $\ast$ & $\ast$ \\
    \end{tabular}
  \end{center}
  \caption{Scaling of 3D sphere sampling with $s = 4$ elements per
           radius on the Juwels supercomputer.  We use the size suffixes
           $u = 10^{-6}$, $m = 10^{-3}$, $M = 10^6$, $G = 10^9$.
           We observe that sphere and element counts scale by exact powers of
           eight, which confirms the validity of our pseudorandom
           generation for generating the spheres.
           The largest mesh refines to $768 \times 10^9$ elements, where
           the symbol $\dag$ indicates that subsequent partitioning
           exhausts the machine's memory.
           At $\ell_\rmax = 16$, we successfully create $17 \times 10^9$
           spheres but complete refinement only up to level 15.
           The rightmost two columns show the run times of partition search
           and variable-size transfer $T_\mathrm{pt}$ as introduced in this
           paper, and the run-time of the local search $T_\mathrm{l}$ for
           reference.
           We list the times in milliseconds, observing that the
           non-local algorithms require less than 1/20th of a second
           up to level 14.%
          }%
  \tablab{spheres-s4}%
\end{table}%


\section{Conclusion}
\seclab{conclusion}

%
%
%
This paper provides algorithms that support the efficient parallelization of
computational applications of increased generality.
Such generalization may refer to multiple aspects.
One concerns the location of objects in the partition beyond a one-element ghost
layer, together with flexible criteria for matching and pruning.
Another is the fast repartitioning of variable-sized element data in linear
storage.
When considering the increased importance of scalable end-to-end simulation,
our algorithms may aid in pre-processing (setting up correlated spatial fields
in parallel, or finding physical source and receiver locations) and
post-processing and reproducibility (writing/reading partition-independent
formats of variable-size element data, optionally selecting readapted subsets).

%

%
%
Our algorithms are application-agnostic, that is, they do not
interpret the data or meshes they handle, and perform well-defined
tasks while hiding the complexity of their execution.
Most are fairly low-level in the sense that they reside in the parallelization
and metadata layer of an application.
They can be integrated by third-party libraries and frameworks and often do not
need to be exposed to the domain scientist.
This approach supports modularity, code reuse, and ideally the division of
responsibilities and quicker turnaround times in development.

We draw on the benefits of a distributed tree hierarchy and a linear ordering
of mesh entities.
Without such a hierarchy, the tasks we solve here would be a lot harder or even
impractical (such as the partition search).
We develop all algorithms for a multi-tree forest, noting that they apply
meaningfully to the common special case of a single tree.



We find that any algorithm runtimes range between milliseconds and a few
seconds, where one second or more occur only for specific algorithms using the
largest setups.
All algorithms are practical and scalable to 21e9 particles and 64Ki MPI
processes on a BlueGene/Q supercomputer system.
In addition, we verify the functionality on the newly installed Juwels
system, creating up to 768e9 elements at a tree depth of 15 levels.

\section*{Acknowledgments}

B.\ gratefully acknowledges travel support by the Bonn Hausdorff Center for
Mathematics (HCM) funded by the Deutsche Forschungsgemeinschaft (DFG, German
Research Foundation) under Germany's Excellence Strategy -- GZ 2047/1,
Project ID 390685813.

The author would like to thank the Gauss Centre for Supercomputing (GCS) for
providing computing time through the John von Neumann Institute for Computing
(NIC) on the GCS share of the supercomputers Juqueen and Juwels at the
J{\"u}lich Supercomputing Centre (JSC).
GCS is the alliance of the three national supercomputing centres HLRS
(Universit\"at Stuttgart), JSC (Forschungszentrum J{\"u}lich), and LRZ
(Bayerische Akademie der Wissenschaften), funded by the German Federal Ministry
of Education and Research (BMBF) and the German State Ministries for Research
of Baden-W{\"u}rttemberg (MWK), Bayern (StMWFK), and Nordrhein-Westfalen (MIWF).

The \pforest software is described on \url{http://www.p4est.org/}.
The source code for the algorithms and the example programs discussed in this
paper is available on \url{http://www.github.com/cburstedde/p4est/}.

We would like to thank A.\ Kraut for sharing her knowledge on Poisson
distributions.
I cannot thank Tobin Isaac enough for inventing \scsplit back in the day.
It is amazing how useful this little algorithm is.

\bibliographystyle{siam}
\bibliography{group,ccgo_new}

\end{document}